\newcommand{\ABT}{\textsf{ABT}}
\newcommand{\spyros}[1]{{#1}}
\newcommand{\eat}[1]{{}}
\newcommand{\mean}[1]{\mathbb{E}\!\left[#1\right]}
\newcommand{\prob}[1]{\mathbb{P}\!\left(#1\right)}
\newtheorem{theorem}{Theorem}
\newtheorem{lemma}[theorem]{Lemma}
\title{Placing Dynamic Content in Caches \\with Small Population}
\def\R{\mathbb{R}}
\def\N{\mathbb{N}}
\def\ind{\mathbbm{1}}
\def\Mcal{\mathcal{M}}
\def\Lcal{\mathcal{L}}
\def\PP{\mathbb{P}}
\def\EE{\mathbb{E}}
\author{
\IEEEauthorblockN{Mathieu Leconte, Georgios Paschos, Lazaros Gkatzikis, Moez Draief, Spyridon Vassilaras, Symeon Chouvardas}
\IEEEauthorblockA{Mathematical and Algorithmic Sciences Lab, France Research Center, Huawei Technologies Co., Ltd.\\
Email: firstname.lastname@huawei.com}
}
\begin{document}
\maketitle
\begin{abstract}
This paper addresses a fundamental limitation for the adoption of caching for wireless access networks due to small population sizes. This shortcoming is due to two main challenges: (i) making timely estimates of varying content popularity and (ii) inferring popular content from small samples. We propose a framework which alleviates such limitations. 

To timely estimate varying popularity in a context of a single cache we propose an Age-Based Threshold ($\ABT$) policy which caches all contents requested more times than a threshold $\widetilde N(\tau)$, where $\tau$ is the content age.
We show that $\ABT$ is asymptotically hit rate optimal in the many contents regime, which allows us to obtain the first characterization of the optimal performance of a caching system in a dynamic context. 
We then address small sample sizes focusing on $L$ local caches and one global cache. On the one hand we show that the global cache learns $L$ times faster by aggregating all requests from local caches, which improves hit rates. On the other hand, aggregation washes out local characteristics of correlated traffic which penalizes hit rate.
This motivates coordination mechanisms which combine global learning of popularity scores in clusters and LRU with prefetching.  
\end{abstract}
\vspace{-0.2in}
\section{Introduction}
Content Delivery Networks (CDNs) transformed the way data is replicated to respond to the ever increasing demand for popular content. The underlying technology uses large network caches to cover densely populated areas of millions of users. This paradigm yields to a number of benefits for the performance of the network, namely reducing latency and saving bandwidth. To further gain on such metrics in a wireless network, content can be stored closer yet to the user, e.g.,~at a base station or at the mobile. The main limitation to the adoption of such an appealing approach is that small local caches will only have a partial view of the content dynamics and each cache will only see populations of small sizes around it. There is indeed a general consensus that small population sizes result in poor hit rates. 
This paper contributes to answering the following fundamental question: \emph{How can one achieve good hit rates in caches which cover small populations?}

Fresh content such as news, music or TV series is produced on a regular basis. One of the characteristics of such content is that it is ephemeral, i.e., it is highly demanded for a certain duration and then the demand fades. Tracking an ever changing popularity profile of content is challenging as effective caching crucially relies on the knowledge of content popularity. 
This aspect is deemed one of the main hurdles to deploying caches closer to the user. At first sight, tracking popularity for local caches may seem hopeless due to the small sample size. \eat{Although content popularity evolution in time (a phenomenon known as temporal locality) and popularity distribution correlations among different locations (known as geographical locality) have been well documented in the literature, a large percentage of the caching literature uses the (time and location) Independent Request Model (IRM) as a basis for algorithmic design and performance analysis.}   
In this paper, we study caches with small population under the assumption of time-varying and unknown content popularity. In what follows we describe an important application of our research.

\begin{figure}[t]
\begin{center}
   \hspace{-0.6in}\begin{overpic}[scale=0.22]{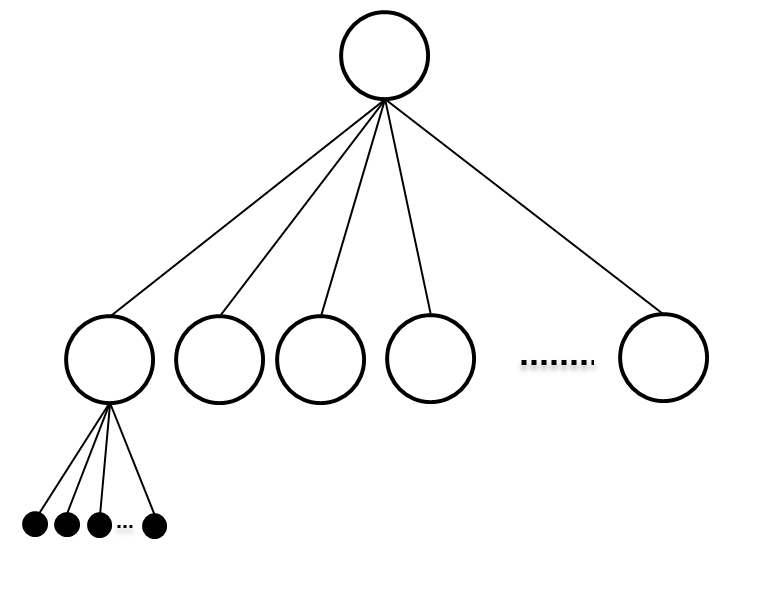}
        \put(26,7){\small small user population}
            \put(97,30){\small $L$ local caches}
            \put(60,70){\small global cache}
      \end{overpic}\vspace{-0.2in} 
      \caption{A hierarchy of $L$ local caches, each one receiving requests from a small population, and a global cache which receives requests from the aggregate large population.}
            \label{fig:distr}
						\vspace{-0.4in}
            \end{center}
\end{figure}

\subsection{Reducing Content Latency with Base Station Caches}
\label{sec:motivation}

The upcoming 5G wireless architectures pose stringent requirements in terms of latency \cite{5G} and motivate the placement of content near the user \cite{5Gcaching}.
Introducing caches at the network edge is an appealing solution since the cost of network equipment (base station or user equipment) substantially exceeds the cost of installing a cache \cite{Roberts13}.
There has recently been a large body of work on cache optimization for wireless systems, cf.~\cite{FemtoCaching, J_gitzenis_13, JiMolisch13, Niesen13}. However, all these ideas suffer from two main unrealistic assumptions, (i) the cache size is of the order of the catalog size, and (ii) the content popularities are known (or static). In this paper we study caching at wireless access by relaxing these two assumptions. Below we discuss the context of our paper.

Regarding the cache size, we remark that 
a determining factor for the caching performance is the ratio $C/M$, where $M$ is the size of the content catalog and $C$ is the cache size.\footnote{In this paper we will make the common simplifying assumption that all files have the same size\spyros {, which is well justified by the fact that we can break large files into equal size chunks and consider the chunks as the cacheable contents}. Hence, $C$ denotes the number of the contents which can be cached.} 
Prior studies of caching performance have shown that when $C/M$ is small, the probability to find a content in the cache becomes negligible~\cite{Roberts13}.
Since the base station (or mobile) cache is physically constrained, caches must be relatively small in storage size, and hence ineffective.

\spyros {There is one important case, however, where the effect of small cache size is counterbalanced by decreasing the size of cached contents: When content access latency reduction is the primary objective of caching, \emph{contents can be split into small chunks and only a small fraction of these chunks need to be cached}} \cite{Sen99}. 
See Figure~\ref{fig:request} for an illustration of this technique.
An interesting observation is that the chunk hit probability (i.e., the probability of finding the first chunk in the cache) for cache size $C$  will be equal to the hit probability for cache size $\xi C$, where $\xi>1$ is the inverse of the file fraction which is required for smooth content display. Hence when using caching for latency minimization, the cache sizes are virtually scaled by $\xi$, which gives a solution to the challenge of small caches. In the remainder of the paper we will study  hit probabilities with large caches, with the understanding that this directly corresponds to  \emph{chunk} hit probabilities in small caches.

\begin{figure}[t]
      \centering
            \includegraphics[width=0.86\linewidth]{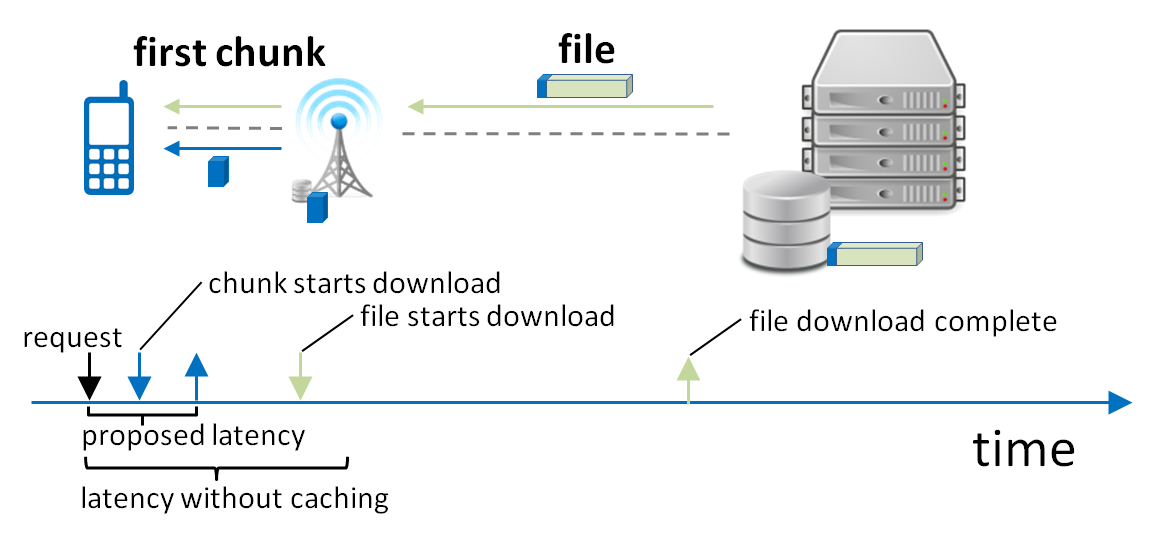}
						\vspace{-0.15in} 
      \caption{Latency reduction by means of caching the first content chunk at the base station.}\vspace{-0.4in}
			\label{fig:request}
\end{figure}

Having dealt with the small cache size issue, we shift our attention to the small population issue.
A base station cache receives requests from a small population of users, hence the number of requests per unit time is also very small. This in turn makes popularity estimation very challenging \cite{5Gcaching}. 
To make caching efficient for dynamic content popularity, 
 the remainder of the paper 
focuses on the study of  
caches with small population.
Below we survey related work on the topic and then explain our contribution.


\subsection{Related Work}

There is an increasing demand to use caching to combat the growth of mobile information \cite{IdealvsReality}, though the adaptation of caching techniques in the wireless domain is challenging \cite{5Gcaching}.
We deal with the problem of \emph{small cache size} using partial caching \cite{Ahle14} to improve the hit rate performance of the content header.
Storing contents partially has been previously proposed in the literature of proxy caching for streaming media applications \cite{Liu04,Sen99}.

The problem of \emph{small user population} on the other hand is relatively underexplored, perhaps because web caching is applied mostly in densely populated areas. It is noted though that hit rate performance has a sharp cutoff point as the user population becomes small \cite{Wolman}, mainly because there is insufficient room for correctly estimating content popularity. In this work we analyze the issue of small user population by formulating a problem on the interface of caching and learning.

Although the use of learning in the caching domain goes back to the days of prefetching for web caching, cf.~\cite{Pallis}, very recently it has been revived in the context of wireless networks. The use of transfer learning to tap social network side-information in order to alleviate data sparsity has been proposed in \cite{bastug2014b}, while \cite{Blasco} models the popularity learning as a multi-armed bandit problem. 
Most prior techniques are limited to the environment of the static popularity.
In practice, not only content popularity is dynamic, but moreover recent works argue that correct modeling has a significant impact on the performance analysis of caching schemes \cite{traverso2015,shen}. 
Learning time-varying popularity is actually an interesting problem on its own \cite{Szabo10,Ahmed13}.
However, any scheme which learns popularity separately from performing content placement is suboptimal \cite{moharir}, which  motivates a joint approach.

\subsection{Our Contribution}


For a single cache we study the joint problem of caching and learning time-varying popularity. We propose  a simple threshold policy called \spyros {Age-Based Threshold} ($\ABT$): a content is stored if it has been requested more than $\widetilde N(\tau)$ times, where $\tau$ is the age of the content\spyros {, i.e., the elapsed time since this content was first inserted in the contents catalog}, and $\widetilde N(\tau)$ is a selected threshold. 
We show that $\ABT$ is asymptotically optimal when we increase the number of contents, which provides a first characterization of the joint problem of caching and learning under time-varying popularity.


We then study an architecture where the popularity is estimated (or learned) at a global point which has access to all the requests arriving in $L$ local caches. 
 We prove that \emph{global learns faster}; by aggregating requests from all $L$ caches it is able to \spyros {track popularity changes} $L$ times faster.
If contents exhibit correlations in locations however, we show that the distribution of the local popularity of contents is more skewed, which means that local learning yields better performance  provided that these local popularities can be well estimated, i.e., \emph{local is more accurate}. \spyros {Combining the two last observations, we propose learning content popularities in clusters which are both able to retain local characteristics and to accumulate enough many request samples.}

Our goal is to learn a good estimate at the global point and then feed it back to the local caches in the form of content scores. In fact, we propose the modification of the threshold $\widetilde N(\tau)$ as a score which  
takes into account both the frequency of requests, as well as the content age. 
Using these scores we propose two globally coordinated mechanisms for \spyros {managing} the local caches: (i) a score-gated LRU\footnote{The Least Recently Used (LRU) cache replacement rule dictates that every requested content is cached, and if the cache is full then the least recently requested content is evicted. The score-gated counterpart avoids caching (gates) certain contents based on popularity scores. } and (ii) a score-based prefetching scheme.  
Here the term prefetching refers to \emph{the act of populating a cache with content which is not currently being requested at that cache}. 
We exhibit, using simulations, that prefetching is crucial for small population caches. 

Although global learning resolves the popularity estimation issue, there is a hidden outstanding issue in our architecture: the extra traffic incurred to prefetch content in the caches.
The latency minimization with caching comes at a cost of increased traffic in the core network due to prefetching. Using our proposed methodology, we evaluate this tradeoff by means of simulations and show that the incurred traffic can be kept significantly small--our simulations show 3\% of increase in total bandwidth in the worst case.



\section{Requests With Time-Varying Popularity}\label{sec:snm}

For our analysis we use a dynamic request model with time-varying popularities, the recently proposed Poisson \emph{Shot Noise Model} (SNM) \cite{traverso2015}.
This model introduces dynamicity in a simple manner while retaining the power law characteristics of instantaneous popularity observed from past works \cite{breslau99,adamic02,newman05}. In fact \cite{traverso2015} shows that SNM fits well real  data \spyros {of content requests in cellular networks}.

The lifetime of each content is associated with a shot, which is characterized by (i) a shape, (ii) a duration, (iii) an arrival instance, and (iv) a volume.
It is reported that the choices of (i)-(ii) have a smaller impact to the hit probability under the LRU cache management policy \cite{traverso2015}. Thus, in the following, we will consider rectangular pulses of fixed durations $T$ for all contents, see Figure~\ref{fig:SNM}, in order to develop an optimal cache management policy amenable to analytic expressions for the parameters of the policy and the resulting hit probability. For different shaped shots the details of our analysis must be revisited, but the main insights can be used to derive heuristic policies for the generic SNM. Also, one could perform the same analysis using a joint distribution of lifespan and shot volumes, as in \cite{olmos2014catalog}. 

The shot arrival times (iii) are points of a Poisson  process with constant rate $\lambda$. 
Denote with $\overline t_m$ the arrival time of shot $m$.
At time $t$ the alive content catalog is given by the set
\[
\mathcal{M}(t)=\{m: \overline t_m\leq t\leq \overline t_m+T\}.
\]

The shot volumes (iv) are determined by a power-law distribution, commonly known to fit well the instantaneous content popularity.
	More specifically, we set the request rate of content $m$ while it is alive to the random variable $\mu_m$ constructed in the following way. First, for any $m$, let $Z_m$ be an i.i.d. random variable drawn uniformly at random in $[0,1]$. Then
\[
\mu_m=\frac{Z_m^{-\alpha}}{\int_0^1z^{-\alpha}dz}\overline{\mu}=Z_m^{-\alpha}\overline{\mu}(1-\alpha), \text{ for all }m,
\]
where $\overline \mu$ is the mean popularity, and $\alpha$ is the power law exponent. 
{We let $f(x)$ denote the density of the power-law distribution at $x\in[\overline \mu(1-\alpha),+\infty)$; $f(x)$ is in fact a Pareto distribution \spyros {with parameters $\alpha_{Pareto} = 1/\alpha$ and $x_m = \overline{\mu}(1-\alpha)$} \cite{newman05}, which is the limit of Zipf distributions for large catalogs.\footnote{We choose to generate the power-law popularity in this particular way to facilitate the modeling in Section~\ref{sec:correlated_popularities}.}}
 Finally, we generate requests independently for each content $m$ using an independent Poisson process with rate $\mu_m$.


\begin{figure}[t!]
\begin{center}
   \begin{overpic}[scale=0.22]{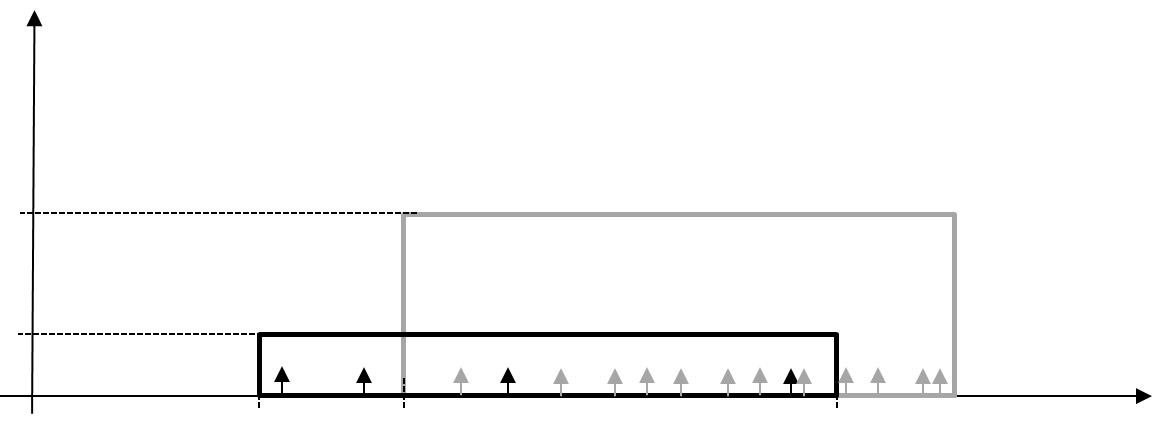}
    \put(22,-4){$\overline t_1$}
		\put(34,-4){$\overline t_2$}
		\put(60,-4){$\overline t_1+T$}
		\put(80,-4){$\overline t_2+T$}
		\put(-4,17){$\mu_2$}
		\put(-4,6){$\mu_1$}
   \end{overpic}
	\caption{Poisson Shot Noise Model (SNM). A realization showing two shots of different volumes and arrival time instances. In our model we keep the shot duration $T$ and the shot shape constant for all contents.}\label{fig:SNM}
	\vspace{-0.35in}
	\end{center}
\end{figure}

\section{Cache Hit Rate With Estimated Popularities}\label{sec:onecache}

In this section we focus on one cache receiving requests with dynamic and unknown popularities and we study the optimal hit rate performance.

\subsection{Hit Rate Optimization With Estimated Popularities}

In what follows we restrict attention to a caching controller which is not aware of the actual content popularities $\mu_m, m\in \mathcal{M}(t)$.
 Instead we assume that the controller 
estimates the content popularities via observations of past requests. To proceed with the analysis we additionally make the following simplifying assumptions:
\begin{itemize}
\item The controller knows the exact arrival times of shots $\overline t_m, m=1,2,\dots$.
\item There is no cost for replacing a content in the cache (i.e., there is no traffic cost for transmitting a content from the origin server to the cache or between caches). We will relax this assumption in section \ref{sec:prefetch}.
\end{itemize}

Our overarching goal is to maximize the hit rate  over the time horizon. However, given that the cache updates induce zero cost, we can decouple the time horizon hit rate optimization to individual problems of maximizing the  hit probability at each time instance. 

We may characterize the alive content $m\in\mathcal{M}(t)$ by its \emph{shot age} and the number of observed requests.
The shot age of content $m$ at time $t$ is the elapsed time since the content appeared in the system, denoted by $\tau_m(t)=t-\overline t_m$. 
	We denote with $N_m(t)$ the number of requests observed for content $m$ by time $t$.
	We represent the caching decision at time $t$ by a binary \emph{caching vector} $(y_m(t))_m$ of dimension $|\mathcal{M}(t)|$, where $y_m(t)=1$ if content $m$ is stored in the cache, and $y_m(t)=0$ otherwise. 
		Hereinafter we will exchangeably use the notations $(x_i)$ and $(x_i)_i$ to denote the vector $(x_i)_{i=1,\dots,I}=(x_1,\dots,x_I)$, and 
	drop the index whenever it is directly inferred from the context.
		The cache size $C$ dictates that the constraint $\sum_m y_m(t)\leq C$ must be satisfied at each time instance.
	
		By pointwise ergodicity of the SNM model  we may study any one time instance; we choose to study $t=0$.
	Since the caching performance at time $0$ depends only on alive shots $\Mcal (0)$, we only need to focus on random events in the time interval $[-T,0]$.
	To simplify notations, hereinafter we will omit the mention of the time index $0$ and write $\tau_m=\tau_m(0)$, $N_m=N_m(0)$, $\mathcal M=\mathcal M(0)$, and $y_m=y_m(0)$.

Next we would like to choose the caching vector $(y_m)$ to maximize the instantaneous hit probability at the origin. 
If the popularities were known the controller would employ the policy \emph{store the most popular}. However, the challenge here lies on the fact that the popularities $(\mu_m)$ are unknown. 
In fact, the instantaneous hit rate at the origin is given by $H(y)=\sum_{m\in\mathcal{M}}y_m\mu_m$. However, the popularities $\mu_m$ are not observed. Therefore, we need to consider instead the expected hit rate conditionally on the available information $(N_m),(\tau_m)$:
\[
\mean{H(y)\big|(N_m),(\tau_m)}=\sum_{m\in\mathcal{M}}y_m\mean{\mu_m|N_m,\tau_m},
\]
where the popularity estimates are computed using the prior model of $\mu_m$:
\begin{align}\label{eq:condexp}
\mean{\mu_m|N_m,\tau_m}
&=\frac{\int_{\mu_m} \mu_m\mathbbm P(N_m|\mu_m,\tau_m)f(\mu_m)d\mu_m}{\int_{\mu_m} \mathbbm{P}\left(N_m|\mu_m,\tau_m\right)f(\mu_m)d{\mu_m}},
\end{align}
where $f$ is the power-law density. To evaluate numerically \eqref{eq:condexp} observe that the popularity of content $m$ is equal to $\mu_m$ and constant  over the period $[-\tau_m,0]$, and the request process is Poisson, hence the term $\mathbbm P(N_m|\mu_m,\tau_m)$ is equal to the probability that a Poisson random variable with mean $\mu_m\tau_m$ is equal to $N_m$, i.e.,
\begin{align*}
\mathbbm P(N_m|\mu_m,\tau_m)&=\mathbbm P\left(Pois(\mu_m\tau_m)=N_m\right)\\
&=(\mu_m\tau_m)^{N_m}\frac{e^{-\mu_m\tau_m}}{N_m!}.
\end{align*}
For every instance $(N_m),(\tau_m)$, we want to find the best contents to store to maximize the conditional expected hit rate:

\vspace{0.1in}

\noindent \underline{\emph{Max instantaneous hit probability with estimated popularities:}}
\begin{equation}
y^*\big((N_m),(\tau_m)\big)=\arg\hspace{-0.15in}\max_{\hspace{-0.1in}\substack{\forall m,\:y_m\in \{0,1\}\\\sum\limits_{m\in\mathcal{M}}y_m=C}}\hspace{-0.05in}\sum_{\hspace{0.08in}m\in\mathcal{M}}y_m\mean{\mu_m|N_m,\tau_m}. \label{eq:opt1}
\end{equation}
The optimization \eqref{eq:opt1} can be solved by storing the $C$ items with the highest estimate $\mean{\mu_m|N_m,\tau_m}$. 
Given values for $(N_m),(\tau_m)$ we may compute numerically the instantaneous hit rate. However, the above are random. We define the maximum expected hit probability $h^*(\lambda,T)$ for shot arrival rate $\lambda$ and shot duration $T$, which will be our main performance metric:
\begin{equation}\label{eq:avhr}
h^*(\lambda,T)=\frac{\mean{H\left(y^*\big((N_m),(\tau_m)\big)\right)}}{\overline\mu\lambda T},
\end{equation}
where $|\Mcal|,(N_m),(\tau_m)$ all depend on $\lambda$ and $T$.
 Computing \eqref{eq:avhr} is complicated mainly because the set of active contents $\Mcal$ is itself random and the caching decisions are correlated across all the active contents. Below, we characterize a simple caching policy which is asymptotically optimal for large catalogs, which allows us to obtain an asymptotic expression for \eqref{eq:avhr}.


\subsection{Age-Based Threshold Policy}

\begin{figure}[t!]
\begin{center}
   \begin{overpic}[scale=0.44]{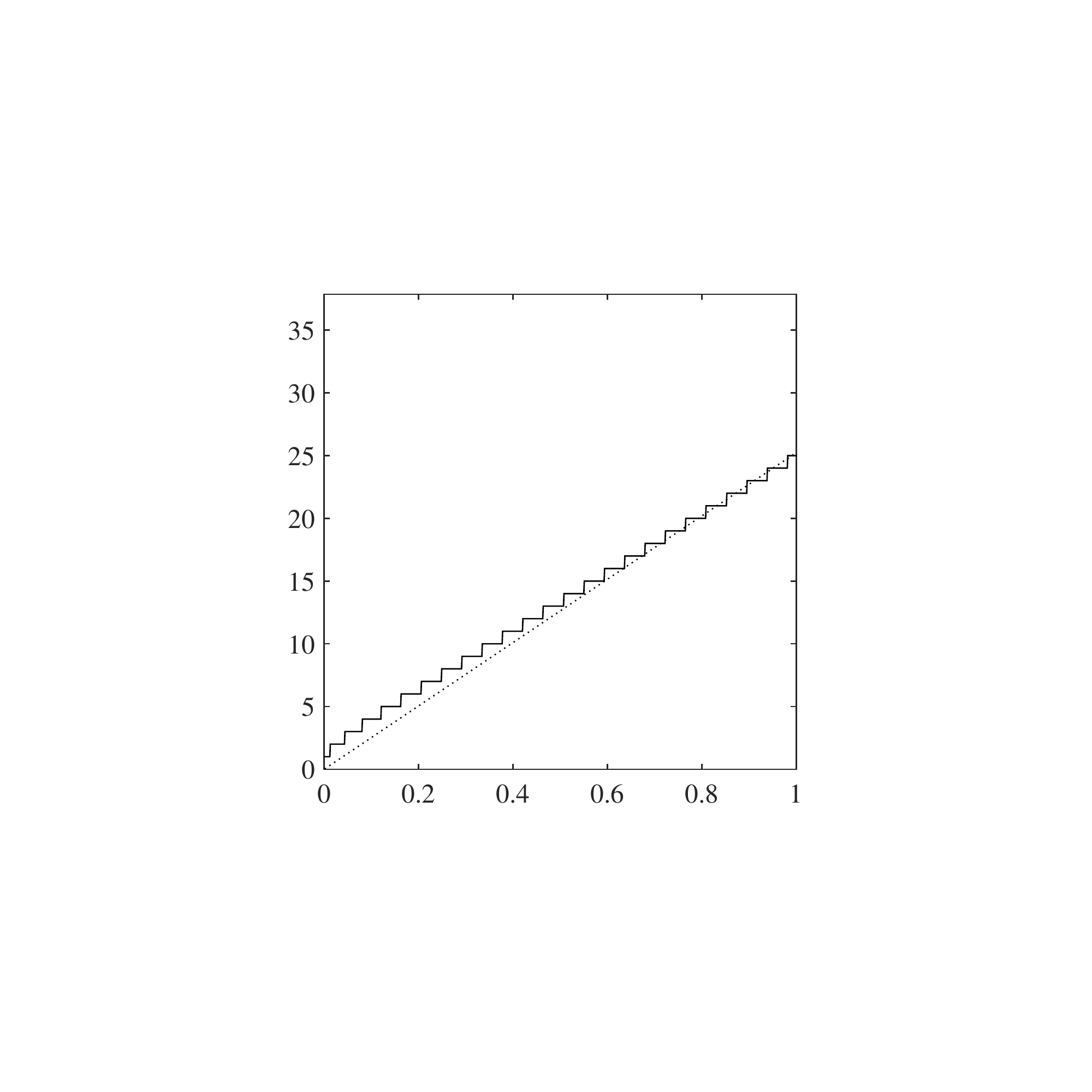}
        \put(35.5,-5.5){\footnotesize $\tau$ (shot age)}
				\put(32,88.5){\small $\ABT$ Threshold}
				\put(27,46){\scriptsize $\widetilde N(\tau)$}
				\put(35,14){\scriptsize $\text{const.}\times\tau$}
				\put(21,21.5){\vector(1,3){7}}
				\put(22,17.5){\vector(4,-1){12}}
   \end{overpic}
	 \begin{overpic}[scale=0.44]{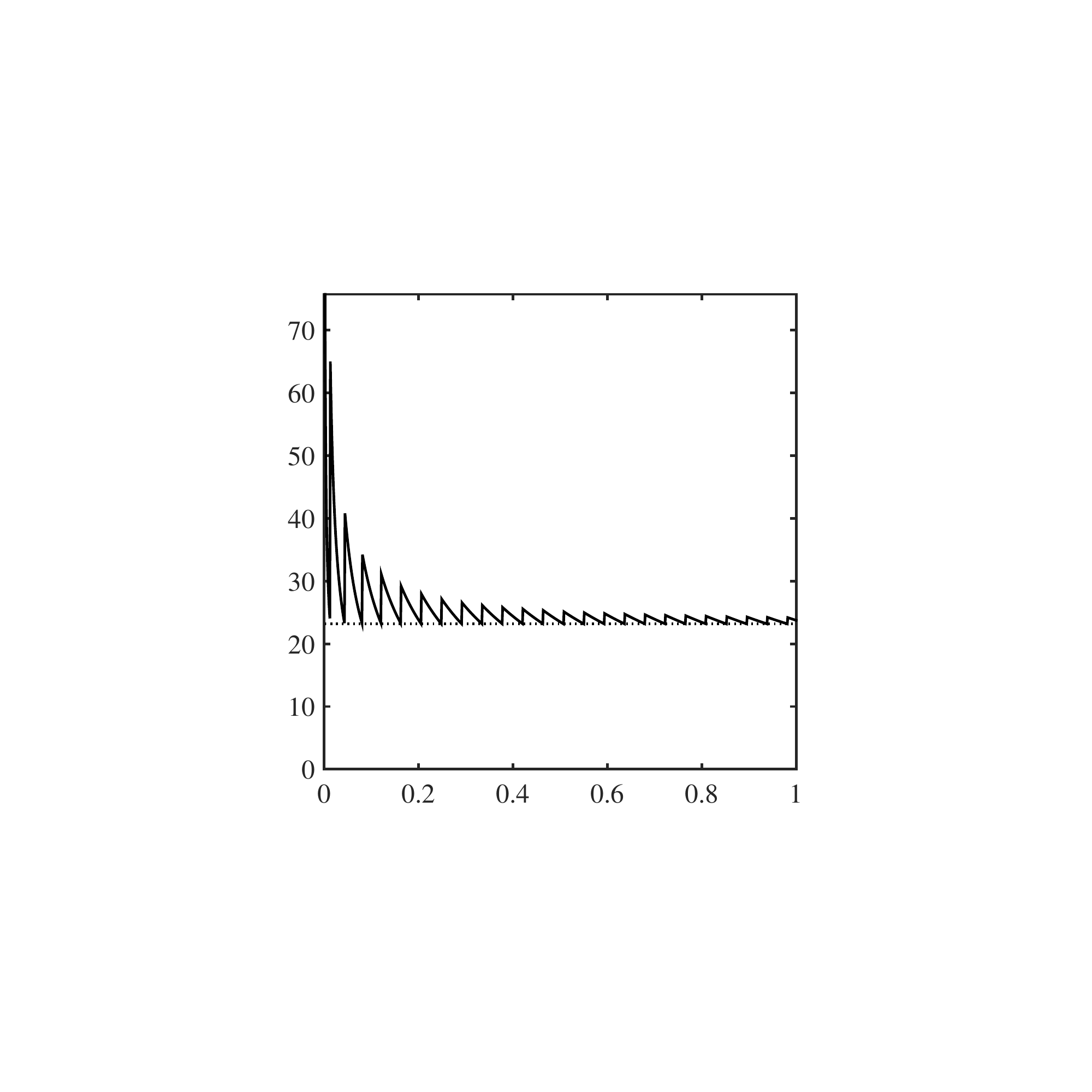}
        \put(35.5,-5.5){\footnotesize $\tau$ (shot age)}
				\put(11,89){\scriptsize Est.~popularity of marginally cached}
				\put(12.3,55.4){\vector(1,1){12}}
				\put(25,70){\tiny $\mean{\mu_m|N_m=\widetilde N(\tau),\tau_m=\tau}$}
				\put(11,35){\vector(1,-1){12}}
				\put(25,20){\scriptsize $\theta(\gamma_c)$}
   \end{overpic}
   \caption{$\ABT$ behavior in the many contents regime. \textbf{Parameters:} $\alpha=0.8$ (reported to be a typical value \cite{breslau99,adamic02}), $T=1$, $\overline\mu=20$, $\gamma_c=10\%$.\vspace{-0.6in}}
   \label{fig: optimal caching plot}
\end{center}
\end{figure}

Our plan is to design a simplified caching policy $(y_m)$ which caches highly requested content without having to calculate the estimates $\widehat \mu_m=\mean{\mu_m|N_m,\tau_m}$ and to solve the optimization \eqref{eq:opt1} at every time instance. 
A complication comes from the fact that the shot age $\tau_m$ affects the estimate $\widehat \mu_m$. Intuitively, on average, to maximize hit probability we need to store more content with larger age $\tau_m$, because the uncertainty in the estimate $\widehat \mu_m$ is lower for them, which results in turn in fewer caching mistakes and a higher efficiency for older contents. We introduce a deterministic threshold $\widetilde N(\tau)$ which depends on the age $\tau$ and is used to allocate cache capacity differently for each $\tau$. Under our policy, content $m$ is stored if it satisfies \spyros { $N_m\geq\widetilde N(\tau_m)$}.


\noindent\rule{3.55in}{0.02in}

\noindent\textbf{Age-Based Threshold (\ABT) Policy.}

\vspace{-0.05in}
\noindent\rule{3.55in}{0.02in}

\noindent\textbf{Parameter Selection.} 
$C$ is the cache size (in contents), $\lambda$ is the shot rate, $T$ the shot duration, and hence $\lambda T$ is the average number of alive shots at any time instance.
Define $\gamma_c=C/\lambda T$, which is roughly the fraction of the content catalog which can be cached.
Denote with $F_{\widehat \mu_m}$ the \spyros {cumulative distribution function} of  $\widehat \mu_m$. 
Choose $\theta$ to be the $\gamma_c$-th upper-percentile of $F_{\widehat \mu_m}$, since $\widehat \mu_m$ has a density $F_{\widehat \mu_m}$ is invertible, hence
\[
\theta(\gamma_c)=F_{\widehat \mu_m}^{-1}(1-\gamma_c).
\]

\noindent\textbf{Age-Based Threshold.}
Choose the threshold $\tilde N(\tau)$ 
\begin{equation}\label{eq:threshold}
\widetilde N(\tau)=\min\{k\in \N : \mean{\mu_m|N_m=k,\tau_m=\tau}\geq \theta(\gamma_c)\}
\end{equation}

\noindent\textbf{Caching Vector.}
  For each content $m\in \mathcal{M}$ observe $N_m,\tau_m$ and choose: 
	\vspace{-0.1in}
\begin{align*}
y_m=\left\{\begin{array}{ll}
1 & \text{if } N_m\geq \widetilde N(\tau_m),\\
0 & \text{otherwise.}
\end{array}
\right.
\end{align*}
\noindent\textbf{Ensuring Cache Size Constraint.} If $\sum_m y_m>C$, then choose arbitrarily $\sum_m y_m-C$ contents and set $y_m=0$.

\noindent\rule{3.55in}{0.02in}

For a given content $m$, $\tau_m$ is known, hence under our policy the caching decision depends only on $N_m$, which simplifies greatly caching decisions. The  complicated part of the policy is to compute the threshold function $\widetilde N(\tau)$. However, this can be done in an offline manner: for any given parameters $C,\lambda,T$ and power law parameters $\overline{\mu},\alpha$ we can numerically compute the threshold using \eqref{eq:condexp} and \eqref{eq:threshold}. Another approach is to compute $\widetilde N(\tau)$  by iteratively filling cache capacity so that the marginal hit rate improvements $\mean{\mu_m\big|N_m=\widetilde N(\tau),\tau_m=\tau}$ for each $\tau$ are approximately equal.\footnote{
This process involves splitting $[0,T]$ to small intervals and increasing the threshold at each interval one-by-one inspecting the marginal hit rate improvements.
Since the possible values of the threshold $\widetilde N(\tau)$ are discrete, we remark that a perfect equality cannot be achieved.}


Figure~\ref{fig: optimal caching plot} (left)  shows the  $\ABT$ threshold $\widetilde N(\tau)$ for different content age $\tau$; 
the dotted line corresponds to expected number of requests $\mu_m\tau_m$ for the content $m$ which is the $C^{\text{th}}$ most popular content in the active catalog.
Note that the optimal threshold roughly follows this line, although it is a bit higher for contents with small age. This indicates that the $\ABT$ policy differs from a simple frequency estimate since it is more conservative with recent contents, as their popularity estimates are less accurate. 
Figure~\ref{fig: optimal caching plot} (right) shows the density of the marginal hit rate improvement $\mean{\mu_m\big|N_m=\widetilde N(\tau),\tau_m=\tau}$ for each age $\tau$; the dotted line is the threshold $\theta(\gamma_c)$ which is the minimum conditional expected popularity of contents optimally stored in the cache and also the marginal hit rate improvement at which the iterative filling algorithm would stop.

There is an intuitive connection between $\ABT$ and the policy which solves optimally \eqref{eq:opt1} by caching the highest $\widehat \mu_m$ values, let us call it $\pi^*$. Similar to $\ABT$, we may think of $\pi^*$ as a threshold policy, only with a threshold 
which results from considering the $\gamma_c$-th upper-quantile of the empirical distribution of $\widehat \mu_m$, which is random and dependent on $(N_m),(\tau_m)$. 
Due to the differences between the two thresholds,  $\ABT$ decisions result in a few caching mistakes and thus in suboptimal hit rate performance. 
However, as the number of contents increases $\lambda\to\infty$, the random threshold of the optimal policy converges to that of $\ABT$. We establish this fact in the following Theorem:


\begin{theorem}[$\textsf{ABT}$ Optimality in  Many Contents Regime]\label{th:ABT}
For shot rate $\lambda$, consider two caching systems, one running the optimal policy $\pi^*(\lambda,T)$, and one with $\ABT$. Denote their average hit probabilities by $h^*(\lambda,T)$ and $h^{\ABT}(\lambda,T)$ respectively.
We let $\lambda$ and $C$ go to infinity together such that $\lim_{\lambda\to\infty}\frac{C}{\lambda T}=\gamma_c$. 
Then we have almost surely: \vspace{-0.09in}
\[
\lim_{\lambda\to\infty}\pi^*(\lambda,T)=\ABT,\vspace{-0.08in}
\]
in the sense that they asymptotically have the same threshold function, and thus they cache the same contents. 

Moreover, $\textsf{ABT}$ is almost surely asymptotically optimal:
\[
\lim_{\lambda\to\infty} h^{\ABT}(\lambda,T)=\lim_{\lambda\to\infty}h^*(\lambda,T)=h^*(\infty,T).\vspace{-0.1in}
\]
where \vspace{-0.08in}
$$h^*(\infty,T)\hspace{-0.04in}=\hspace{-0.04in}\frac{1}{\overline\mu T}\hspace{-0.04in}\int_{\tau}\hspace{-0.03in}\int_{Z_m}\hspace{-0.15in}\mu_m\prob{N_m\geq\widetilde N(\tau)|\mu_m,\tau_m=\tau}dZ_m\:d\tau$$
\end{theorem}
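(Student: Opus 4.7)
The plan is to reformulate both $\pi^*$ and $\ABT$ as threshold policies on the estimated popularity $\widehat\mu_m := \mean{\mu_m|N_m,\tau_m}$ and to show that the two thresholds coincide in the limit $\lambda\to\infty$. Since $\pi^*$ picks the top-$C$ of the $\widehat\mu_m$, it is equivalent to ``store $m$ iff $\widehat\mu_m \geq \Theta^*(\lambda)$'', where $\Theta^*(\lambda)$ is the $\gamma_c$-th upper-quantile of the \emph{empirical} distribution of $\{\widehat\mu_m:m\in\Mcal\}$. By construction, $\ABT$ stores $m$ iff $\widehat\mu_m \geq \theta(\gamma_c)$, the $\gamma_c$-th upper-quantile of the \emph{population} distribution $F_{\widehat\mu_m}$. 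Hence the first claim reduces to proving $\Theta^*(\lambda)\to\theta(\gamma_c)$ almost surely; the identity between the age-dependent thresholds $\widetilde N(\cdot)$ then follows because $\widetilde N(\tau)$ is obtained from $\theta$ by inverting the monotone map $k\mapsto\mean{\mu|N=k,\tau}$ through~\eqref{eq:threshold}.

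For that I would invoke a Glivenko--Cantelli-type argument. Under the SNM, conditional on $|\Mcal|$ the ages $(\tau_m)_{m\in\Mcal}$ are i.i.d.\ uniform on $[0,T]$ (this is the standard PASTA-style property for a stationary Poisson shot process); the popularities $\mu_m$ are i.i.d.\ Pareto by construction; and given $(\mu_m,\tau_m)$, $N_m\sim\Poi(\mu_m\tau_m)$ independently across $m$. Hence the $\widehat\mu_m$ are i.i.d.\ and $|\Mcal|\sim\Poi(\lambda T)\to\infty$ a.s., so the empirical CDF of $\{\widehat\mu_m:m\in\Mcal\}$ converges uniformly and almost surely to $F_{\widehat\mu_m}$. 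Since $\widehat\mu_m$ inherits a density from the Pareto prior, $F_{\widehat\mu_m}$ is continuous and strictly increasing on its support, so the quantile functional is continuous at $1-\gamma_c$ and $\Theta^*(\lambda)\to\theta(\gamma_c)$ a.s.

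For the hit-rate assertion, I would next argue that the map
\[
\tilde h(\theta):=\frac{1}{\overline\mu\lambda T}\mean{\sum_{m\in\Mcal}\mu_m\,\ind\{\widehat\mu_m\geq\theta\}}
\]
is continuous in $\theta$ at $\theta(\gamma_c)$: only $O(\lambda\cdot|\Theta^*-\theta(\gamma_c)|)$ contents have $\widehat\mu_m$ between the random and the deterministic threshold, and each contributes a bounded amount to the normalized hit rate because $\widehat\mu_m$ admits a bounded density and its conditional popularity contribution is locally bounded near $\theta(\gamma_c)$. This yields $|h^*(\lambda,T)-h^{\ABT}(\lambda,T)|\to 0$ a.s., which is asymptotic optimality. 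The closed form for $h^*(\infty,T)$ is then obtained by Campbell--Mecke applied to the Poisson shot process on $\R$ with marks $(Z_m,N_m)$: for the deterministic threshold $\widetilde N(\cdot)$,
\[
\mean{H(y^{\ABT})}=\lambda\int_0^T\!\!\int_0^1 \mu_m\,\prob{N_m\geq \widetilde N(\tau)\,\big|\,\mu_m,\tau_m=\tau}\,dZ_m\,d\tau,
\]
with $\mu_m=Z_m^{-\alpha}\overline\mu(1-\alpha)$; dividing by $\overline\mu\lambda T$ recovers the stated formula.

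The main obstacle is not the Glivenko--Cantelli step itself, but controlling the cache-capacity clipping of $\ABT$ and the analogous random number of ``winners'' under $\pi^*$. Specifically, one must show that $\sum_m y_m^{\ABT}$ concentrates around $C\sim\gamma_c\lambda T$ with fluctuations of order $\sqrt\lambda$, so that the arbitrary tie-breaking rejection of $\sum_m y_m-C$ contents affects only $o(\lambda)$ items and costs $o(1)$ in the normalized hit rate. Combining this concentration with the continuity of $\tilde h$, so as to promote the one-sided bound $h^{\ABT}\leq h^*$ to an asymptotic equality, is the delicate bookkeeping on which the full argument hinges.
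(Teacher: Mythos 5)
Your proposal is correct and follows essentially the same route as the paper: both reformulate $\pi^*$ as a threshold rule on $\widehat\mu_m$ with the random $\gamma_c$-th upper-quantile of the empirical law as threshold, invoke almost-sure convergence of the empirical distribution to $F_{\widehat\mu_m}$ (Glivenko--Cantelli) plus the existence of a density to get convergence of the quantile to $\theta(\gamma_c)$, treat the cache-capacity clipping of $\ABT$ as a vanishing-fraction correction, and obtain the closed form for $h^*(\infty,T)$ by averaging over the Poisson catalog (Campbell's formula). The paper is terser about the clipping and the hit-rate continuity step, but the underlying argument is identical.
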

\begin{proof}
The proof is in appendix~\ref{app:a}.
\end{proof}

We call the regime $\lambda\to\infty$ the ``many contents'' regime. 
This is generally a reasonable regime in the caching context, where catalogs of contents typically contain millions, if not billions, of contents, and the caches are dimensioned so that they can store a fraction of the catalog of active contents.
A corollary of Theorem \ref{th:ABT} is that the threshold \eqref{eq:threshold} separates the seemingly most popular contents from the less popular ones, such that the fraction of contents deemed popular is exactly $\gamma_c=C/\lambda T$, which captures the relative cache size ($C/N$) in our model. Hence, we can think of this threshold as a way of separating the $C$ seemingly most popular contents from the rest. For the rest of the paper, we consider the many content regime and focus on the influence of the shot duration $T$; from now on, we omit the mention of $\lambda=\infty$ in $h^*(\infty,T)$.


\section{Aggregating Estimates From $L$ Caches}

We consider a hierarchy of $L$ caches connected to a central cache, as in  Figure~\ref{fig:distr}. 
 Each content request arrives first at one of the $L$ local caches, and then it is observed by the global cache. 
In this section we explain that the hit rate performance of the local caches can be improved if popularities are estimated at the global cache.



\subsection{Local vs Global Estimation}

\subsubsection{Request Model for Uncorrelated Requests}\label{sec:model2}

We clarify how the SNM model is used in the $L$ cache system. Let $N^{\mathcal L}_m(t)$ be an inhomogeneous Poisson process describing the requests for content $m$ which reach the global cache, and assume that $N^{\mathcal L}_m(t)$ is built using our SNM model from Section~\ref{sec:snm}, where the mean popularity of contents at the global cache is equal to $\overline\mu$. 
When a request is made, we randomly select one local cache uniformly with probability $1/L$ and assume that the specific local cache \spyros {was the one this request came from}. We denote by ${N}_m^l(t)$ the thinned inhomogeneous Poisson process observed at local cache $l$.

\subsubsection{Global is Faster}
As before we will fix a particular time instance ($t=0$) and study the behavior of our system at this instance. 
For this section, we will omit reference to absolute time, and track different times using the shot age $\tau$ with respect to the observation instance $t=0$.

We let $\Lcal$ be the set of local caches, with $|\Lcal|=L$. Denote by $N_m^l(\tau), \tau\geq 0$, the number of requests for content $m$ arrived at cache $l\in\Lcal$ in the time interval $[-\tau,0]$.\footnote{Here we slightly abuse the notation $N_m^l(\tau)$ to count requests in the interval $[-\tau,0]$ instead of $(-\infty,\tau]$.} Observe that $N_m^l(\tau_m)=N_m$ is the number of all requests for this content at cache $l$ so far, and $(N_m^l(\tau))_\tau$ corresponds to the entire history of requests for item $m$ and cache $l$. 
Note that $N_m^l(\tau)$ for $l=1,\dots,L$ are all independent Poisson processes with time-varying rate $\mu^l_m(\tau)$. In this section, we consider the case where the $\mu_m^l(\tau)$ are equal for all $l$; however, we keep the index $l$ to stress that it refers to a quantity at local cache $l$.

By the properties of thinning Poisson processes, the rate of the aggregate request process satisfies
\[
\mu^\mathcal L_m(\tau)=L{\mu}^l_m(\tau), \quad\forall m.
\]

To compare local versus global estimation we  define two ways in which a local cache $l$ can decide its caching vector $(y^l_m)_m$.

\noindent\textbf{Local estimation.} The caching policy $\pi^l$ takes as input the local request information only, i.e., it is a causal mapping of past local observations:
\[
(y_m^l)_m=\pi^l\left[(N_m^l(\tau))_{m,\tau}\right].
\]
Let $h_l^{\pi^l}(T)$ be the average hit probability of policy $\pi^l$ using local estimation when the SNM shots have duration $T$; here
$T$ has a profound impact on the quality of caching decisions since it determines the dynamicity of the model. For example, for a very small $T$, many contents are requested only for a very few times in their lifetime. The best local cache hit probability performance with local estimation is then \[h_l^*(T)=\max_{\pi^l}h_l^{\pi^l}(T).\]

\noindent\textbf{Global estimation.} The caching policy can take as input the collection of local request information, in this case we pass all requests as arguments and we write $\pi^\mathcal L$ instead of $\pi^l$
\[
(y_m^l)_m=\pi^\mathcal L\left[(N^l_m(\tau))_{m,l,\tau}\right],
\]
where the index $\mathcal L$ on $\pi^{\mathcal L}$ points out that the histories of all local caches are available to the policy.
Similarly as above, let $h_\mathcal L^{\pi^\mathcal L}(T)$ be the average hit probability of policy $\pi^\mathcal L$ using aggregate estimation.
The best hit probability performance with global estimation is \[h_\mathcal L^*(T)=\max_{\pi^\mathcal L}h_\mathcal L^{\pi^\mathcal L}(T).\]
 
The global estimation may use the observations from all locations $(N^l_m(\tau))_{m,l,\tau}$ to better detect changing popularities. This directly translates to a hit rate benefit, which we capture with the following result.

\begin{theorem}[Global is Faster]\label{th:agg}
Consider the SNM model of Section~\ref{sec:model2} whose  requests $(N^l_m(\tau))_{m,l,\tau}$ are observed by the global system, and a thinned version of them $(N^l_m(\tau))_{m,\tau}$ are observed by local cache $l\in\Lcal$. 
The maximum hit probability performance of global system $h_\mathcal L^{*}(T)$ compares to the performance of any local cache $h_l^*(T)$ in the following manner
\[
h_\mathcal L^{*}(T/L)=h_l^*(T),\quad \forall T>0.
\]
\end{theorem}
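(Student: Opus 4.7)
The plan is to prove the identity by a pathwise time-rescaling coupling between the two scenarios, exploiting the fact that a Poisson process of rate $L\mu$ observed on $[0,T/L]$ is statistically equivalent, after a deterministic time change, to a Poisson process of rate $\mu$ observed on $[0,T]$. In scenario~(A), with shot duration $T/L$ and global information, the global estimator sees for each shot a Poisson process of rate $L\mu_m^l$ on $[0,T/L]$; in scenario~(B), with shot duration $T$ and local information, the local estimator sees a Poisson process of rate $\mu_m^l$ on $[0,T]$. I will build a measure-preserving bijection between these two setups that matches caching decisions and hence the induced local hit rates.

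Concretely, I would fix a realization of scenario~(B) and define a coupled realization of scenario~(A) by compressing time by a factor of $L$. Each shot $m$ of~(B) arriving at $\bar t_m^B$ with local popularity $\mu_m^l$ is mapped to a shot of~(A) arriving at $\bar t_m^A=\bar t_m^B/L$ with the same local popularity (and hence global popularity $L\mu_m^l$); each local arrival of~(B) at time $s$ is mapped to a global arrival of~(A) at time $s/L$. A direct computation confirms that the compressed shot arrivals form a Poisson process of the correct rate over $[-T/L,0]$, that the compressed per-content point processes are Poisson of rate $L\mu_m^l$, and that both the expected number of active contents and the cache fraction $\gamma_c$ are preserved. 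Under the coupling the set of active contents at $t=0$ is in bijection, the local popularities agree, and $\bigl(N_m^{\mathcal{L}}(\tau_m^A),\tau_m^A\bigr)$ in~(A) equals $\bigl(N_m^l(\tau_m^B),\tau_m^B/L\bigr)$ in~(B). Since $N_m^{\mathcal{L}}=\sum_{l'}N_m^{l'}$ is sufficient for the common local rate under the exchangeable $1/L$-thinning, any policy in~(A) can without loss of optimality be taken to depend only on $\bigl(N_m^{\mathcal{L}},\tau_m^A\bigr)_m$; the coupling then induces an explicit bijection between admissible $\pi^{\mathcal{L}}$ and $\pi^l$ under which the caching vectors $(y_m^l)_m$ agree in joint distribution.

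Since the instantaneous hit probability at the local cache is a function of the caching vector and the local popularities alone -- both matched by the coupling -- the hit rate attained by any $\pi^{\mathcal{L}}$ equals that of its image $\pi^l$; taking the supremum over policies on each side yields $h_{\mathcal{L}}^{*}(T/L)=h_l^{*}(T)$. An equivalent route is to apply the change of variable $\tau=Ls$ inside the integral defining $h_{\mathcal{L}}^{*}(T/L)$ via the limiting expression of Theorem~\ref{th:ABT}; this turns it term-by-term into the expression for $h_l^{*}(T)$. The main technical obstacle is verifying that the coupling preserves the joint distribution of the entire request history -- shot arrivals, per-content Poisson requests, and their $1/L$ thinnings -- rather than merely single-instant marginals, so that the bijection is genuine at the level of causal policies; this reduces to the standard time-scaling invariance of Poisson processes, but must be checked simultaneously for all shots and all $L$ thinnings.
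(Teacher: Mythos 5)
Your proposal is correct and takes essentially the same route as the paper: the paper also proves the result via a time-rescaling (the ``$L$-speedup dynamics'' $\nu_m^{\mathcal L}(\tau)=\mu_m^{\mathcal L}(L\tau)$), establishing in a lemma that the speedup aggregate request process is distributionally identical to a local request process, and then building the same policy-level bijection and taking suprema on both sides. Your pathwise coupling is the same idea stated at the realization level rather than as a distributional identity, and your explicit appeal to sufficiency of $N_m^{\mathcal L}=\sum_{l'}N_m^{l'}$ under exchangeable $1/L$-thinning is a useful clarification the paper leaves implicit when it restricts the global policy to depend on the aggregate process only.
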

\begin{proof}
The proof is in appendix~\ref{app:b}. The proof shows more generally that \emph{for any} policy using local estimation for a system with shot duration $T$, we can define a policy using global estimation having the same performance for shot duration $T/L$, and conversely.
\end{proof}
According to Theorem \ref{th:agg} the global system aggregates more samples and its performance can be understood as virtually slowing down the popularity dynamics. 
Since faster dynamics have  a detrimental effect on hit rate, this virtual slowing down helps the global system to improve hit rate performance. 
Below we provide numerical performance comparison between local and global learning. We use the $\ABT$ policy in the many contents regime, where its optimality allows us to compute (in numerical terms) the exact benefit we have from aggregation. We define the hit probability gain as
\begin{align*}
G^{\ABT}(T)=h_\mathcal L^{*}(T)-h_l^{*}(T)=h_l^{*}(TL)-h_l^{*}(T).
\end{align*}
Figure~\ref{fig:lvg} plots $G^{\ABT}(T)$ and shows that gains reach $35\%$ of absolute hit rate improvement for a specific $T$. We observe that there is a wide range of values of $T$ for which the system greatly benefits from aggregating requests and learning faster.

\begin{figure}[t]
\begin{center}
   \begin{overpic}[scale=0.38]{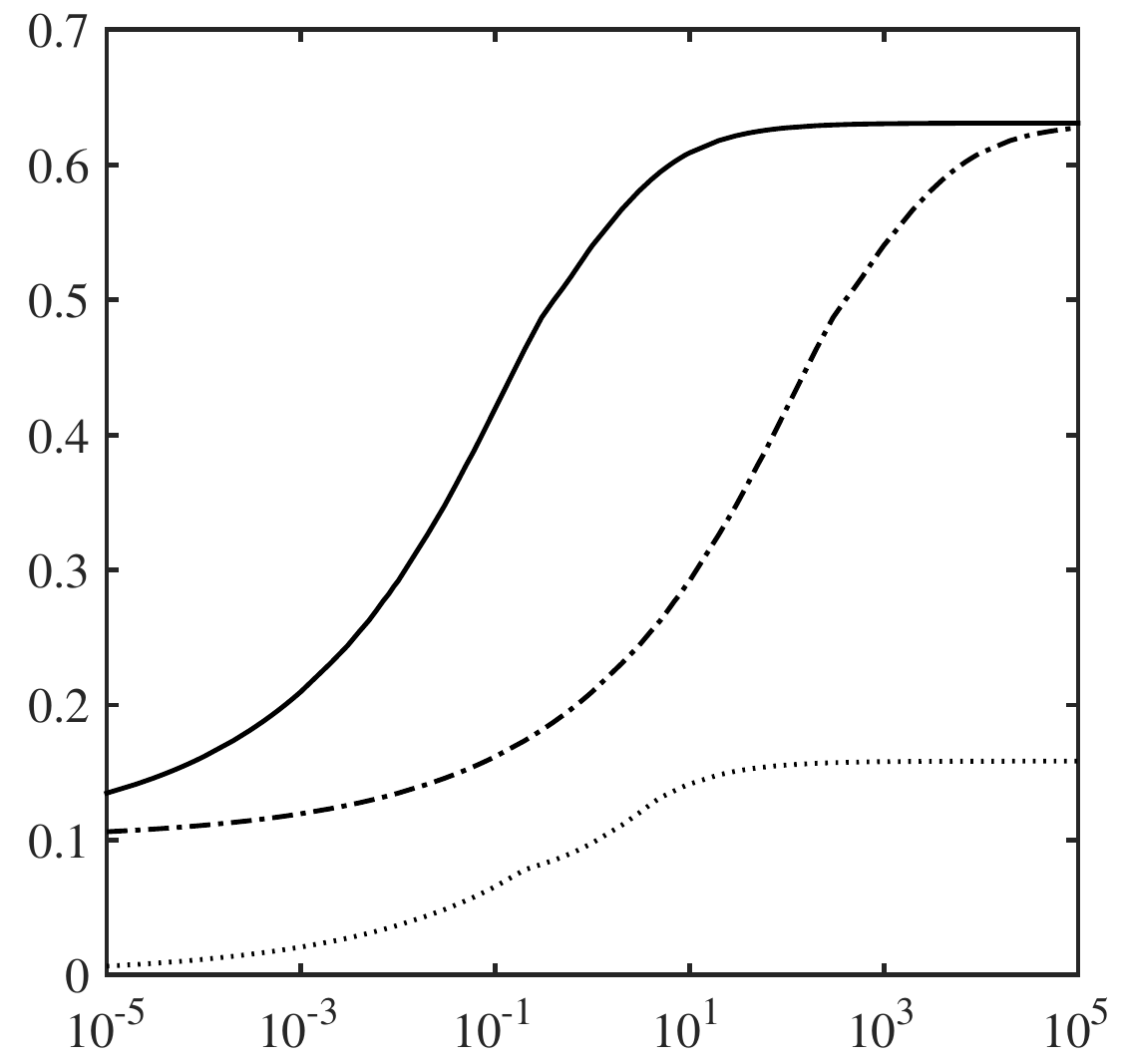}
        \put(32,-4){\scriptsize $T$ (shot duration)}
				\put(12,84){\scriptsize Average Hit Probability}
				\put(25,58){\tiny global }
				\put(23,52){\tiny $h^*_\mathcal L(T)$}
				\put(57.5,66){\tiny local}	
				\put(54.5,60){\tiny $h_l^*(T)$}	
				\put(75,28.5){\tiny whole files}	
      \end{overpic}
			\begin{overpic}[scale=0.38]{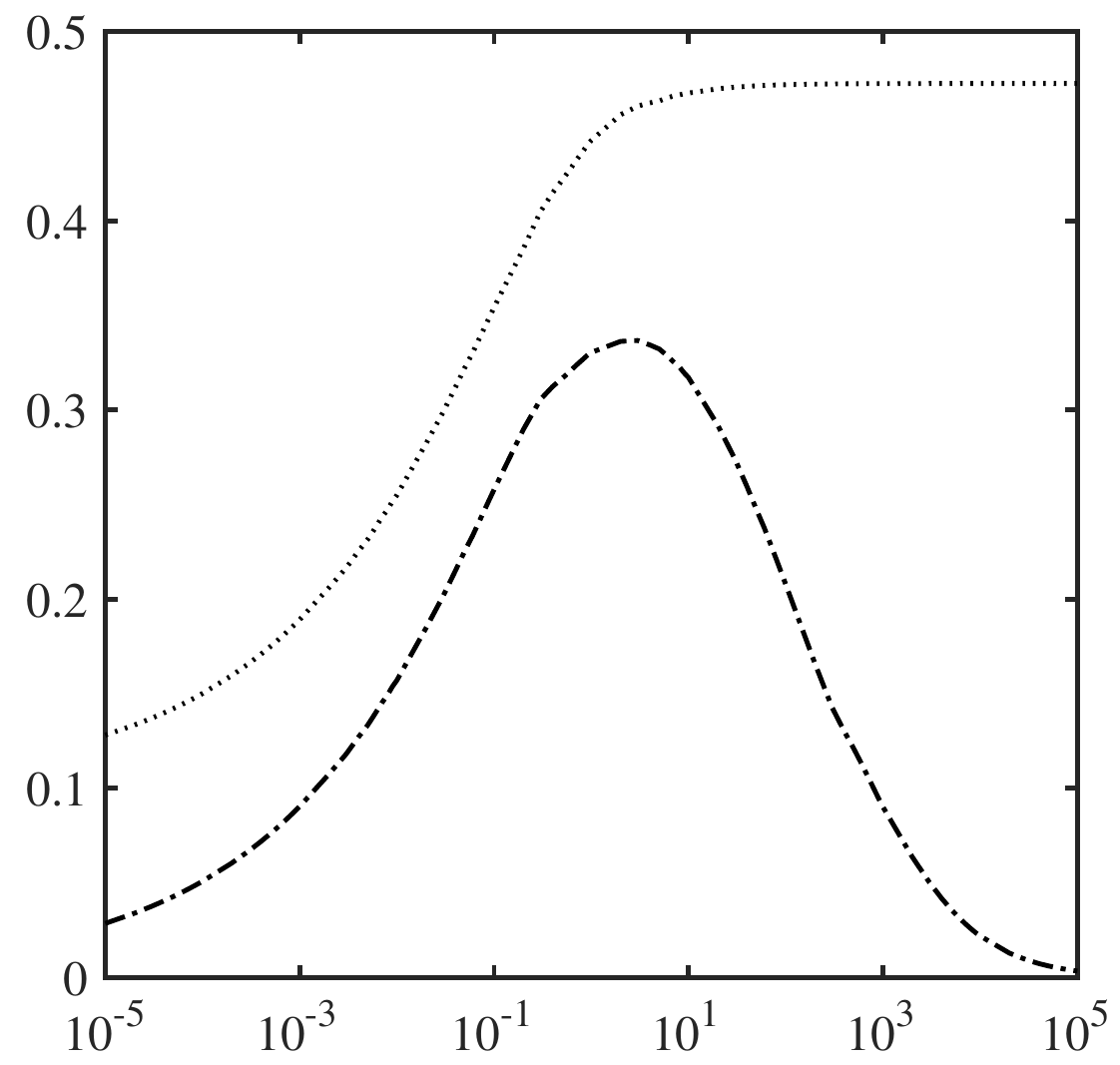}
        \put(32,-4){\scriptsize $T$ (shot duration)}
				\put(12,84){\scriptsize  Absolute Gain}
				\put(43,43){\tiny global vs local}
				\put(45,37){\tiny $G^{\ABT}(T)$}
				\put(59,82){\tiny global vs whole files}
      \end{overpic}
      \caption{\textbf{Global is faster for uncorrelated caches.} (left) Optimal  hit probability under time-varying popularity using (i) global learning,  (ii) local learning, or (iii) storing entire files (cache size reduced to $\gamma_C/\xi$). (right) Absolute hit probability gain. 
			\textbf{Parameters:} $\alpha=0.8$, $\overline\mu=20$, $\gamma_c=10\%$, $L=\xi=1000$.
			\vspace{-0.3in}}
            \label{fig:lvg}
            \end{center}
\end{figure}

\subsection{Correlated Popularities}\label{sec:correlated_popularities}


We might expect that some popularities may vary from region to region; this could be attributed to different sociological and cultural backgrounds of users or different types of activities associated with these locations. 
For example, job commuters might pursue similar requests for contents
and hence office areas might ``see'' a particular request pattern. 
The caching benefit  from geographical locality of content has been recently pointed out \cite{kurose,scellato2011,huguenin2012}. 

In this section we assume that the contents exhibit geographical correlations and hence there exist groups of contents which are very popular in a subset of local caches. 
Although for identical local caches the best approach was to learn from the aggregation of  all caches, 
here it may be more efficient to restrain the aggregation to subsets of caches which witness similar traffic patterns.

\subsubsection{Request Model for Correlated Locations}

We propose here a model for correlated local popularities $(\mu_m^l)$. 
As before, aggregate popularities $(\mu_m^\mathcal L)$ are described by the SNM model of Section~\ref{sec:snm}, with global mean popularity $\mean{\mu_m^\mathcal L}=\overline\mu$.
To model correlations between the local cache popularities $(\mu_m^l)$, we draw inspiration from the field of community detection \cite{lelarge2013reconstruction} and inhomogeneous random graphs \cite{bollobas2007phase}:
\begin{itemize}
\item Each content $m$ is associated with a feature vector $X_m$. To simplify the model we let $(X_m)$ be independent uniform random variables taking values in $[0,1]$. 

\item Each location $l$ is associated with feature vector $Y_l$, which are again chosen independently and uniformly in $[0,1]$.

\item We define a kernel $K(x,y)=g(|x-y|)$, where $g$ is continuous, strictly decreasing on $[0,1/2]$, symmetric and $1$-periodic, with $\int_{[0,1]}g(|x-y|)dy=1$ for all $x\in[0,1]$. For such a correlation kernel $K$, we can think of the feature vectors $(X_m),(Y_l)$ as lying on the torus $[0,1]$ rather than the interval.
\end{itemize}
The local popularity of content $m$ at cache $l$ is defined as 
\begin{equation*}
\mu_m^l=\mu_m^\mathcal L\frac{K(X_m,Y_l)}{\sum_{l'\in\Lcal}K(X_m,Y_{l'})},\quad \forall m,l. 
\end{equation*}
As the number of caches $L$ increases and provided the kernel function $g$ satisfies some basic conditions, the normalization constant almost surely becomes deterministic: 
\[\frac1L\sum_{l'\in\Lcal}K(X_m,Y_{l'})\underset{L\to\infty}\to\int_{[0,1]}g(|X_m-y|)dy=1,\] 
so that 
\begin{equation}\label{eqn: limit local pop}
\mu_m^l\underset{L\to\infty}\to \mu_m^\mathcal L K(X_m,Y_l)/L.
\end{equation}

This basic model can easily be extended to multi-dimensional features, to capture more complex correlation structures.

\subsubsection{Local is More Accurate}
With correlated popularities, the popularity distribution is more skewed if observed on a subset of caches, and less skewed if aggregated over all caches. Since popularity skewness is advantageous to caching, it should not be surprising that learning in clusters can outperform learning globally.
Recall that 
${h_\mathcal L^*(T)}$ denotes the maximum average hit probability achieved by observing the aggregated request, and ${h_l^*}(T)$ the corresponding maximum average hit probability when observing local requests at cache $l$. 
Due to convexity, we have the following result.
 
\begin{theorem}[Local is More Accurate - Known Popularities]\label{th:corr}
In the limit of a static system (i.e., assuming $T\to\infty$), the hit probability performance of local learning is higher than that of aggregate global learning, i.e., for any global popularity distribution $(\mu_m^\mathcal L)$ it holds that
\[
{h_l^*}(\infty)\geq {h_\mathcal L^*}(\infty).
\]

Furthermore, 
 as the number of edge caches $L$ tends to infinity (such that \eqref{eqn: limit local pop} holds), the maximum expected hit probability ${h_l^*}(\infty)$ is almost surely:
\begin{align*}
\lim_{L\to\infty}{h_l^*}(\infty)&=\frac{1}{\overline\mu}\mean{\mu_m^l\ind(\mu_m^l\geq\theta^l)}\\
&=\int\bigg\{2\int_0^{g^{-1}\left(\frac{L\theta^l}{\mu_m^\mathcal L}\right)} g(t) \:dt\bigg\}\frac{\mu_m^\mathcal L}L\:dZ_m,
\end{align*}
where $g^{-1}(t)=\min\{x\geq0:\:g(x)\geq t\}\in[0,1/2]$ is the inverse of $g$ and $\theta^l$ is the unique value satisfying
\[
\gamma_c=\prob{\mu_m^l\geq\theta^l}=2\int g^{-1}\left(\frac{L\theta^l}{\mu_m^\mathcal L}\right)\:dZ_m.
\]
\end{theorem}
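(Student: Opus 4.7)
The proof naturally splits into the inequality $h_l^*(\infty) \geq h_\mathcal{L}^*(\infty)$ and the explicit $L\to\infty$ formula for $h_l^*(\infty)$. For the inequality, my plan is to exploit the factorization $\mu_m^l = \mu_m^\mathcal{L}\,K(X_m,Y_l)/L$: since $K(X_m,Y_l)$ is independent of $\mu_m^\mathcal{L}$ (which depends only on $Z_m$) and has $\mean{K(X_m,Y_l)}=1$, the tower property gives $\mean{\mu_m^l\mid\mu_m^\mathcal{L}}=\mu_m^\mathcal{L}/L$. In the static limit popularities are known exactly, so the local-learning optimum equals $h_l^*(\infty)\,\overline{\mu}^l = \sup_{\prob{S}=\gamma_c}\mean{\mu_m^l\ind_S}$ over all measurable subsets $S$, while the aggregate-global optimum may only range over subsets measurable with respect to $\mu_m^\mathcal{L}$ and therefore equals $\sup_{\prob{S}=\gamma_c}\mean{(\mu_m^\mathcal{L}/L)\ind_S}$. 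I would then invoke the expected-shortfall duality $\sup_{\prob{S}=\gamma_c}\mean{Y\ind_S}=\inf_t\{\mean{(Y-t)^+}+t\gamma_c\}$ to rewrite both optima. Jensen's inequality applied to the convex map $y\mapsto(y-t)^+$ conditionally on $\mu_m^\mathcal{L}$ yields $\mean{(\mu_m^l-t)^+}\geq\mean{(\mu_m^\mathcal{L}/L-t)^+}$ for every $t$, and taking the infimum over $t$ gives $h_l^*(\infty)\geq h_\mathcal{L}^*(\infty)$ as claimed.

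For the explicit formula, I would substitute the limit $\mu_m^l\to\mu_m^\mathcal{L}K(X_m,Y_l)/L$ from \eqref{eqn: limit local pop} into $\mean{\mu_m^l\ind(\mu_m^l\geq\theta^l)}$ and compute the law of $K$. Since $X_m,Y_l$ are independent uniform on the torus $[0,1]$, the torus distance $|X_m-Y_l|$ is uniform on $[0,1/2]$ with density $2$; because $g$ is strictly decreasing on $[0,1/2]$ the inverse $g^{-1}$ is well defined there, giving $\prob{K\geq k\mid Z_m}=2\,g^{-1}(k)$ and $\mean{K\ind(K\geq k)\mid Z_m}=2\int_0^{g^{-1}(k)}g(t)\,dt$. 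Conditioning on $Z_m$ and using that $(X_m,Y_l)\perp Z_m$, the first integral identity of the theorem follows directly. The same computation applied to $\prob{\mu_m^l\geq\theta^l}=\gamma_c$ pins down $\theta^l$, and uniqueness is immediate from the strict monotonicity of $g^{-1}$ on its active range, which makes the right-hand side strictly decreasing in $\theta^l$.

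I expect the main technical obstacle to be handling the almost-sure asymptotics cleanly. The convergence in \eqref{eqn: limit local pop} is a law of large numbers over the independent features $(Y_{l'})$ that must be combined with the many-contents asymptotics of Theorem~\ref{th:ABT} so that both limits coexist, and dominated convergence must be invoked to pass both the expectation in the hit-rate formula and the Jensen step through the limit. A secondary subtlety is verifying strong duality for the expected-shortfall representation, which requires the distributions of $\mu_m^l$ and $\mu_m^\mathcal{L}/L$ to be continuous (guaranteed by the Pareto density $f$ and continuity of $g$), so that the $\gamma_c$-upper-quantile exactly achieves $\prob{\mu_m^l\geq\theta^l}=\gamma_c$.
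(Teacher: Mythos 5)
Your proposal is essentially correct and reaches the same two facts the paper's proof rests on: (i) a conditional-mean (mean-preserving-spread) relation between the local and aggregate popularities, and (ii) that the optimal top-$\gamma_c$ hit-rate functional is monotone under mean-preserving spreads. The packaging differs. The paper states (ii) abstractly: the optimum is a maximum of linear functions, hence convex in the popularity vector, and then cites the convex stochastic order implied by $L\,\mean{\mu_m^l\mid Z_m}=\mu_m^\mathcal{L}$. You instead reformulate the top-$\gamma_c$ optimum via the expected-shortfall (CVaR) dual $\sup_{\prob{S}=\gamma_c}\mean{Y\ind_S}=\inf_t\{\mean{(Y-t)^+}+t\gamma_c\}$ and apply conditional Jensen to $y\mapsto(y-t)^+$ term by term. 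Both routes are the same conditional-Jensen argument in different clothes; yours is more explicit and elementary, the paper's is shorter once the convex-order machinery is taken off the shelf. Your explicit-formula computation (law of the torus distance, inversion of $g$, and the resulting $2g^{-1}$ and $2\int_0^{g^{-1}}g$ expressions) matches the paper's exactly.

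One small but real gap: you derive $\mean{\mu_m^l\mid\mu_m^\mathcal{L}}=\mu_m^\mathcal{L}/L$ from the asymptotic factorization $\mu_m^l\to\mu_m^\mathcal{L}K(X_m,Y_l)/L$, which is only valid as $L\to\infty$. The theorem's first inequality is stated for every $L$, and the paper proves the conditional-mean relation exactly for finite $L$ from the identity $\sum_{l'}\mu_m^{l'}=\mu_m^\mathcal{L}$ together with the exchangeability of the $Y_{l'}$'s (which gives $\mean{K(X_m,Y_l)/\sum_{l'}K(X_m,Y_{l'})}=1/L$). You should replace your asymptotic derivation by this exact argument; the rest of your Jensen step then carries through unchanged for all $L$. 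A second, more cosmetic point: in justifying $h_\mathcal{L}^*(\infty)=\sup_{\prob{S}=\gamma_c}\mean{(\mu_m^\mathcal{L}/L)\ind_S}$ you should make explicit that $S$ ranges over $\mu_m^\mathcal{L}$-measurable events and that the tower property collapses $\mean{\mu_m^l\ind_S}$ to $\mean{\mean{\mu_m^l\mid\mu_m^\mathcal{L}}\ind_S}$; this is the step that turns ``aggregate learning'' into ``optimal caching under the rescaled aggregate popularity law,'' which is how the paper implicitly frames it.
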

\begin{proof}
The proof is in appendix~\ref{app:c}.
\end{proof}


The above theorem characterizes the performance of correlated caches for nearly-static popularities.
The benefit in this case is due to the skewness of the local popularity distribution. 
However, when the popularities are unknown, we saw that aggregation is beneficial. In fact we observe a tradeoff; (i) aggregating all observations improves performance by collecting more samples and having more accurate estimates, but (ii) aggregating in subsets allows for more accurate popularity models with higher skewness value. 
\subsubsection{Clustering}
In order to retain the benefit from the increased skewness of the local popularities and at the same time  capture faster dynamics than local estimation would allow, we need to estimate the local popularities of contents based on the global information. One way to achieve this goal is to identify locations with similar local popularity profiles and to aggregate samples from these locations only. Leveraging our correlated local popularities model and the understanding of optimal policies gained from the previous sections, we can explore the tradeoff between capturing local popularities and detecting faster dynamics.

For this paper, we leave aside the problem of determining which local caches should be clustered together from the requests history. Instead, we assume that we know the embeddings of local caches $(Y_l)$. When the kernel $K(x,y)=g(|x-y|)$ has the simple form assumed here, the local caches which should be aggregated are those with similar feature vectors $Y_l$. 
We will consider a cluster $\mathcal S\subseteq\mathcal L$ covering the subset $[0,\omega]$ of the feature space, and study feature vectors $(Y_l)_{l\in\mathcal S}$.
For $\omega=1/k$, $k\in\N$, this is equivalent to considering $k$ disjoint clusters covering equal portions of the feature space.
 A meaningful regime here is to let the total number of local caches $L$ increase, i.e., $L\to\infty$, which means that we are looking at smaller and smaller local user populations, while keeping the global popularity distribution fixed, i.e., $\overline\mu$ and $T$ are fixed. As $L\to\infty$, we have $\frac{|\mathcal S|}{L}\to\omega$, and the aggregated popularity of items $m$ within $\mathcal S$ equals
\[
\mu_m^\mathcal S=\sum_{l\in\mathcal S}\mu_m^l=\frac{\mu_m^\mathcal L}{L}\sum_{l\in\mathcal S}K(X_m,Y_l)\hspace{-0.03in}\underset{L\to\infty}\to\hspace{-0.03in}\mu_m^\mathcal L\hspace{-0.03in}\int_0^\omega\hspace{-0.1in}g(|X_m-y|)dy.
\]
As discussed in previous sections, the aggregated popularity $\mu_m^\mathcal S$ is not known and has to be estimated from the aggregate requests $N_m^\mathcal S=\sum_{l\in\mathcal S}N_m^l$. We denote by $h_\mathcal S^*(T)$ the limit as $L\to\infty$ of the optimal hit probability averaged over local caches in the cluster $\mathcal S$ for shot duration $T$, when popularities are estimated from $N_m^\mathcal S$; we call this quantity the clustered hit probability. Leveraging the techniques from the previous sections, the clustered hit probability is obtained by appropriately defining an aging-based threshold $\widetilde N_m^\mathcal S(\tau)$ to (approximately) equalize the marginal improvements $\mean{\mu_m^\mathcal S\big|N_m^\mathcal S=\widetilde N^\mathcal S(\tau),\tau_m=\tau}$ for all $\tau\in[0,T]$. This yields
\begin{align*}
&h^*_\mathcal S(T)=\frac1{\overline\mu\omega}\mean{\mu_m^\mathcal S\ind\left(N_m^\mathcal S\geq\widetilde N_m^\mathcal S(\tau_m)\right)}=\frac{1}{\overline\mu\omega T}\\
&\hspace{-0.05in}\times\hspace{-0.05in}\int_\tau\hspace{-0.03in}\int_{Z_m}\hspace{-0.03in}\int_{X_m}\hspace{-0.15in}\mu_m^\mathcal S\prob{\hspace{-0.02in}N_m^\mathcal S\geq\widetilde N^\mathcal S(\tau)\big|\mu_m^\mathcal S,\tau_m=\tau\hspace{-0.03in}}\hspace{-0.02in}dX_mdZ_md\tau,
\end{align*}
where the thresholds $\widetilde N^\mathcal S(\tau)$ also ensure the correct fraction of contents is stored:
\begin{align*}
\gamma_c&=\prob{N_m^\mathcal S\geq\widetilde N_m^\mathcal S(\tau_m)}\\
&=\frac{1}{T}\int_\tau\hspace{-0.03in}\int_{Z_m}\hspace{-0.03in}\int_{X_m}\hspace{-0.15in}\prob{N_m^\mathcal S\geq\widetilde N^\mathcal S(\tau)\big|\mu_m^\mathcal S,\tau_m=\tau}dX_mdZ_md\tau.
\end{align*}
Figure~\ref{fig: performance curves}-(left) shows the clustered hit probability $h_\mathcal S^*(T)$ as a function of the shot duration $T$ for different sizes $\omega$ of the cluster and for a particular kernel $g(x)=5\left(1-2x\right)^4$.
As $T\to\infty$, smaller values of $\omega$ (i.e. smaller clusters) yield better hit rates, as stated in Theorem~\ref{th:corr}; however, as the system becomes more dynamic (i.e., for smaller $T$) small cluster sizes fail to estimate the aggregated popularities $\mu_m^\mathcal S$, which results in poor clustered hit rate.
Cluster size $\omega=1$ corresponds to learning by aggregation of all requests in one big cluster, which is advantageous when $T$ is small. 
 The right panel shows the kernel $g(x)$, along with smoothed versions of the kernel $\int_0^\omega g(|x-y|)dy$ corresponding to using a cluster size $\omega$. 
When $\omega=1$ (global learning) the corresponding smoothed kernel is flat, resulting in loss of location information. 
However, smaller values of $\omega$ yield smoothed kernels which approximate better and better the true correlation kernel, which allows to capture the local popularity characteristics.


\begin{figure}[t]
\begin{center}
   \begin{overpic}[scale=0.415]{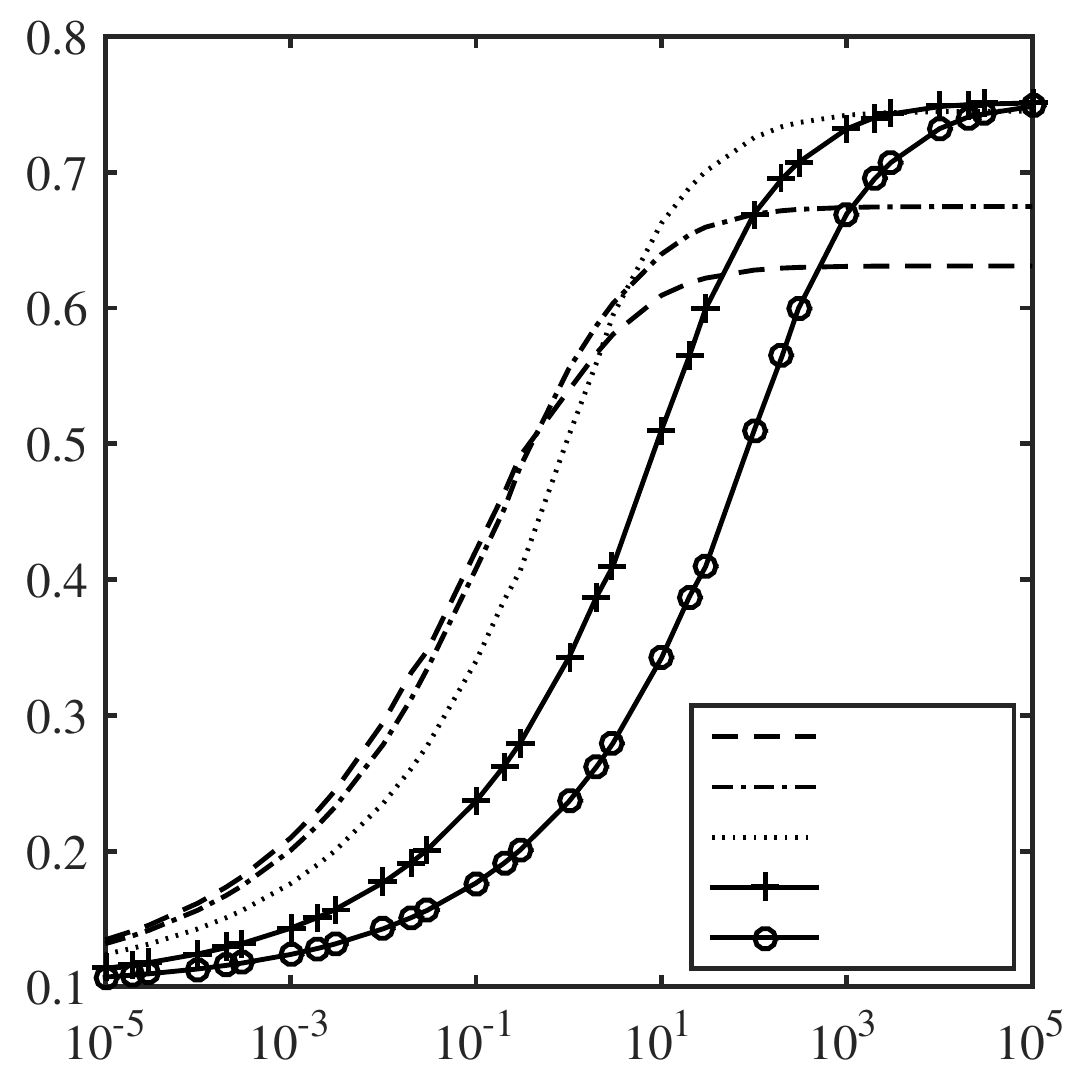}
        \put(32,-4){\scriptsize $T$ (shot duration)}
				\put(12,89){\scriptsize Average Hit Probability}
				\put(78,31){\tiny $\omega$=1}
			  \put(78,26.4){\tiny $\omega$=.5}
				\put(78,22){\tiny $\omega$=.1}
				\put(78,17.2){\tiny $\omega$=.01}
				\put(78,12.4){\tiny $\omega$=.001}
      \end{overpic}
			\begin{overpic}[scale=0.417]{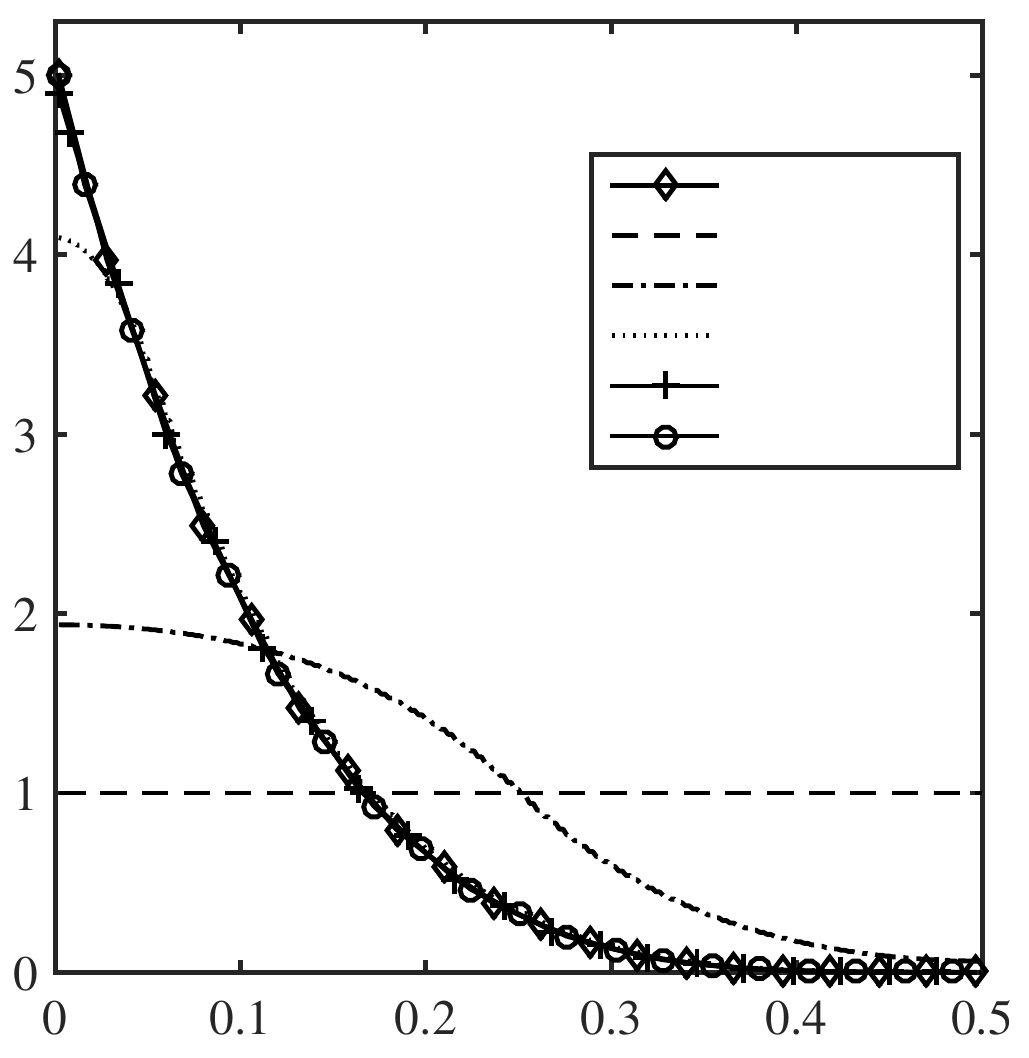}
        \put(2,-4){\scriptsize Distance between cluster center and $X_m$}
				\put(12,90.5){\scriptsize  Virtual smoothing from clustering}
				\put(70,81.2){\tiny true kernel}
				\put(70,76.4){\scriptsize $\omega$=1}
				\put(70,71.6){\scriptsize $\omega$=.5}
				\put(70,66.8){\scriptsize $\omega$=.1}
				\put(70,62){\scriptsize $\omega$=.01}
				\put(70,57.2){\scriptsize $\omega$=.001}
      \end{overpic}
      \caption{\textbf{Learning in clusters for correlated caches.} (left) Optimal hit probability under time-varying popularity using clustering for different relative size $\omega$ of cluster. (right) True kernel vs. smoothed kernels resulting from clustering. 
			\textbf{Parameters:} $\alpha=0.8$, $\overline\mu=1$, $\gamma_c=10\%$.
			\vspace{-0.25in}}
            \label{fig: performance curves}
            \end{center}
\end{figure}

\section{To Prefetch Or Not}\label{sec:prefetch}

So far our analysis did not consider the traffic required for placing the content in the local caches. In this section we
 focus on practical policies which either perform adaptive caching (without prefetching) or explicitly prefetch content which is not yet requested. 
We first introduce popularity scores which are calculated by the global cache and then made available at the local caches. 
These scores can be used both for performing score-gated LRU as well as for determining which contents to prefetch. 
We present simulations of the proposed techniques and showcase that prefetching is of fundamental significance to small population caches.

\subsection{Age-based Popularity Scores}

As explained in the previous section, it is advantageous to estimate popularities at the global cache and then use the estimates at the local caches. One standard methodology to coordinate this mechanism is to use \emph{content scores}. A score is simply a value per content which can be used to perform caching. For example, we may give value 1 to very popular contents and value 0 to the rest. 

We exploit intuition from the one-cache analysis in section~\ref{sec:onecache} to propose the use of threshold functions $\widetilde N(\tau;\gamma_c)$ as scores. 
Recall the definition of the threshold $\widetilde N(\tau;\gamma_c)$
of $\ABT$ policy
\begin{equation*}
\widetilde N(\tau;\gamma_c)=\min\{k\in \N : \mean{\mu_m|N_m=k,\tau_m=\tau}\geq \theta(\gamma_c)\},
\end{equation*}
where $\theta(\gamma_c)=F^{-1}_{\widehat\mu_m}(1-\gamma_c)$.
Then, the contents which satisfy $N_m>\widetilde N(\tau_m;\gamma_c)$ are the $\gamma_c$ with highest popularity estimates and should be cached; 
here $\gamma_c$ is the equivalent cache size.
We may produce new thresholds by choosing different values for $\gamma_c$. In particular let us pick $\beta_1$, $\beta_2$ such that
\[
1\geq \beta_1> \gamma_c >\beta_2\geq 0.
\]
Replacing $\gamma_c$ with $\beta_1$ or $\beta_2$ is equivalent to considering the $\ABT$ policy on a virtual cache with larger or smaller cache size respectively. In particular, if we use $\beta_1>>\gamma_c$ the virtual cache is larger, hence the threshold $\widetilde N(\tau;\beta_1)$ smaller: almost all contents will pass the threshold. Clearly we cannot store all these in our cache (which is of size $\gamma_C$), but if a content with $(N_m,\tau_m)$ does not satisfy $N_m\geq \widetilde N(\tau_m;\beta_1)$ we may infer that it is ``super unpopular'', see Figure~\ref{fig:multifig}-(a). Similarly, by using $\beta_2<<\gamma_c$ hence a small virtual cache, only the ``super popular'' contents will satisfy the threshold.
The idea is to use  the function $S_m(\beta) =\ind\left(N_m>\widetilde N(\tau_m;\beta)\right)$ as a generalized score for the popularity of content $m$. 
What is convenient in this definition is that we take into account the age of content without complicating the design of scores.


\begin{figure*}[t]
	\centering
	\hspace{0.03in}
	\begin{overpic}[scale=0.15]{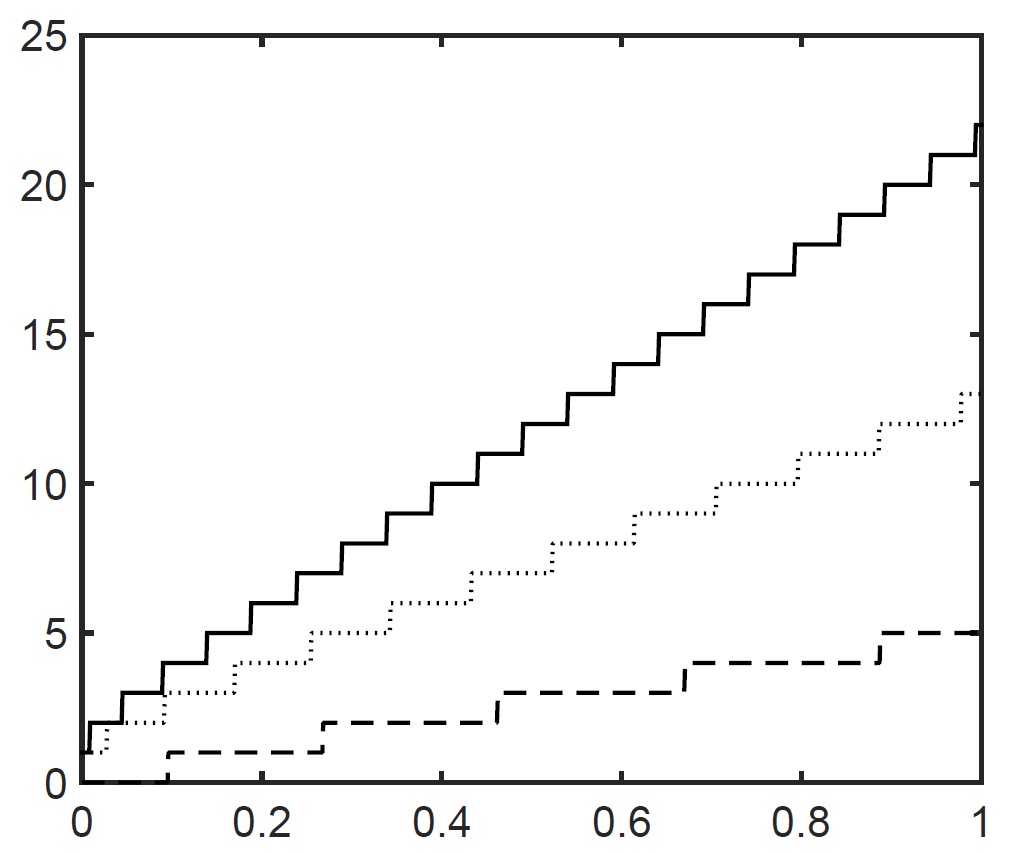}
	\put(5,-6){\small (a)}
		\put(75,12){\footnotesize gate}
		\put(25,54){\footnotesize prefetch}
		\put(41,-6){\footnotesize Age $\tau$}
		\put(-5,15){\footnotesize \rotatebox{90}{Number of Requests}}
		\put(60,41.5){\scriptsize $\widetilde N(\tau ;\gamma_c)$}
		\put(60,22){\scriptsize  $\widetilde N(\tau ;\beta_1)$}
		\put(60,65){\scriptsize  $\widetilde N(\tau ;\beta_2)$}
	\end{overpic}
		\hspace{0.16in}
	\begin{overpic}[scale=0.17]{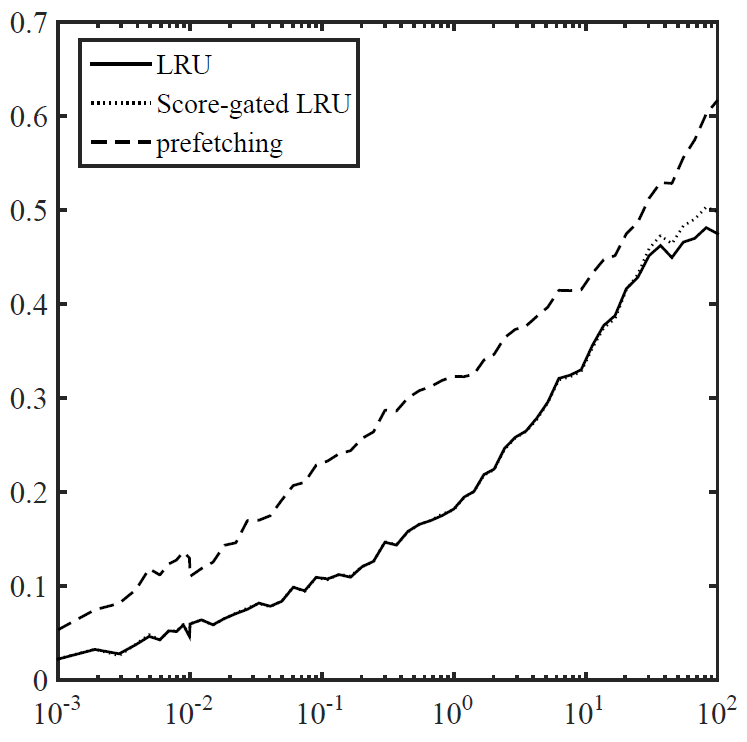}
	\put(5,-6){\small (b)}
		\put(-6,35){\footnotesize \rotatebox{90}{Hit Probability}}
				\put(25,-6){\footnotesize Shot duration $T$}
	\end{overpic}
	\hspace{0.16in}
	\begin{overpic}[scale=0.17]{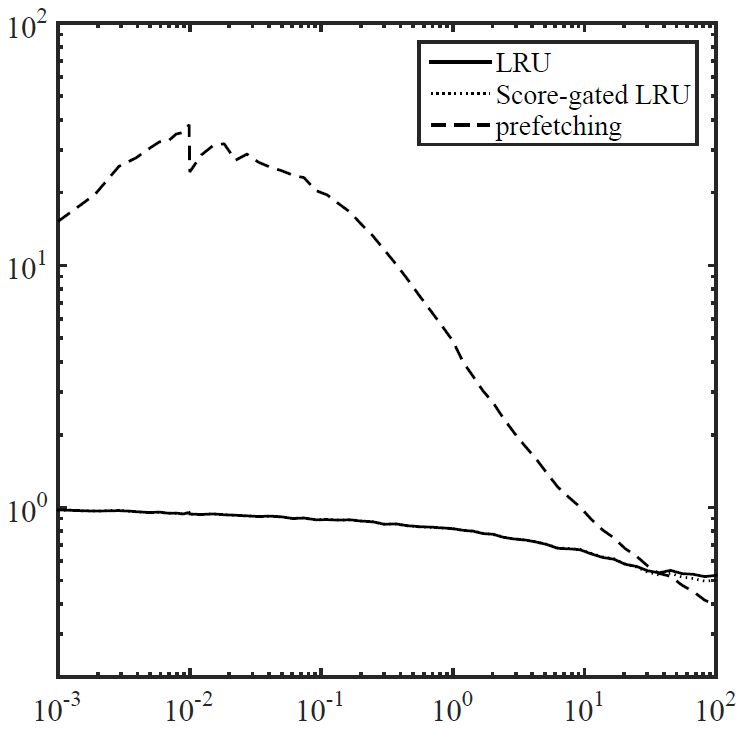}
	\put(5,-6){\small (c)}
		\put(-6,7){\footnotesize \rotatebox{90}{Transmissions per request}}
		\put(25,-6){\footnotesize Shot duration $T$}
	\end{overpic}
	\hspace{0.16in}
	\begin{overpic}[scale=0.23]{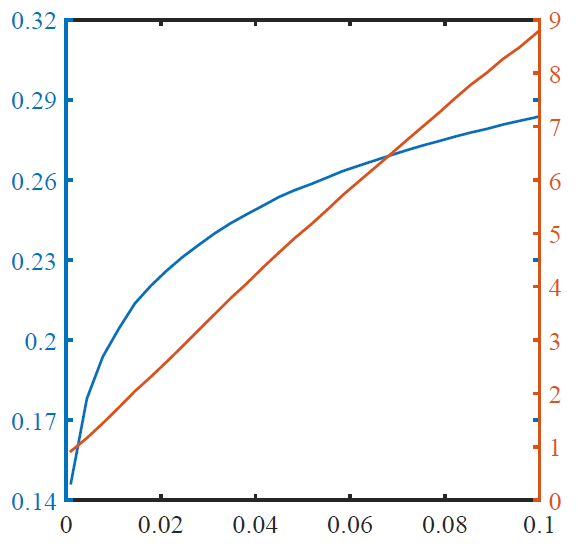}
	\put(5,-6){\small (d)}
		\put(31,-6){\footnotesize Threshold $\beta_2$}
		\put(-6,35){\footnotesize \rotatebox{90}{Hit Probability}}
		\put(98,94){\footnotesize \rotatebox{-90}{Transmissions per request}}
	\end{overpic}
	\caption{Simulation of LRU policies with scores. (a) Thresholds for age-based scores. 
	(b) Hit probability performance comparison. (c) Traffic footprint performance comparison. (d) Performance of LRU with prefetching vs threshold parameter $\beta_2$, hit probability (read left) and transmissions per request (read right). \textbf{Simulation Parameters:} $\alpha=0.8$, $\lambda=10^{4}/T$, $\overline\mu=10$,  $\beta_1=50\%$, $\gamma_C=10\%$, $\beta_2=5\%$, $L=\xi=1000$.}\vspace{-0.2in}
	\label{fig:multifig}
\end{figure*}

\subsection{Score-gated LRU}

The Least Recently Used (LRU) replacement rule is one of the most widely used caching policies. 
An intuitive way to implement LRU is to maintain a linked-list where the contents are always stored from the most recently used (head) to the least recently used (tail). A new request puts the new content at the head and pushes all contents by one position in the list causing the eviction of the content from the tail. Requests for existing contents simply bring the content to the head.

LRU is desirable in practice because it is purely adaptive
and simple to implement. 
Nevertheless it performs quite poorly in our setup. A traditional improvement over LRU is the so-called \emph{score-gated LRU}, whereby the content requests are filtered using a threshold on the content score. The high-score requests follow the LRU rule, while the low-score ones are never cached. 
In the context of our model we may use the age-based scores  $S_m(\beta_1)$ to  
perform score-gated LRU using the function . 

In Figure~\ref{fig:multifig}-(b),(c) we compare LRU and gated-LRU for different shot durations $T$, in a hierarchy of $L=1000$ caches. 
We use $\beta_1=50\%$, which means that only $50\%$ of highly unpopular content is not cached.
In particular, Figure~\ref{fig:multifig}-(c) shows that the number of transmissions from the origin server are roughly the same for the two policies, since for adaptive policies this is equal to one minus the hit probability.
Figure~\ref{fig:multifig}-(b) shows that gated-LRU benefits from predictions only at high values of $T$ where the popularity is semi-static. In the more dynamic scenarios each one of the 1000 local caches only receives a handful of requests per content and hence no adaptive caching policy can be effective.

\subsection{LRU with Prefetching}

With adaptive policies the first local request for a content is always a miss, a fact which may hurt hit rates in dynamic settings. 
To amend this situation we propose prefetching content to local caches whenever it is deemed ``super popular'' by the global cache.
There is a simple coordination mechanism for this.
In a periodic fashion (here we choose a period $T$), the controller sends a fake request for content $m$ if $S_m(\beta_2)=1$. If a local cache does not have content $m$, it performs a prefetching operation, else it simply brings the content to the head.
Figure~\ref{fig:multifig}-(b) shows that the performance of this proactive mechanism successfully improves the hit rate.
	
	Although prefetching is clearly improving hit rates, 
	we need to measure the amount of traffic it induces.
Figure~\ref{fig:multifig}-(c) shows how many times on average each content is transmitted over the backhaul for each request.
In very dynamic settings we may need to make 30 transmissions in total to deliver one request. Although this might seem expensive, if we factor in the considerations of partial caching and the coefficient $\xi$, we conclude that this corresponds roughly to $30/\xi$ increase in total used bandwidth, which in this example is equal to $3\%$, which is extremely low for a system with $1000$ cache. 
Figure~\ref{fig:multifig}-(d) shows how hit rate and traffic footprint tradeoff for $T=1$ when we vary threshold $\beta_2$; the diminishing returns in hit rate may motivate the use of smaller values of $\beta_2$ in a joint consideration of hit rate and traffic footprint performance.

\section{Conclusions}
Our work focuses on learning time-varying popularities at wireless access caching.
An architecture which combines global learning and local caches with small population is proposed to improve the latency of accessing wireless content.
It is shown that age-based thresholds can timely exploit time-varying popularities to improve caching performance. Moreover, the caching efficiency is maximized by a combination of global learning and clustering of access locations. Score mechanisms are then proposed to help with practical considerations at local caches.

\bibliography{IEEEabrv,Caching}
\bibliographystyle{IEEEtran}

\appendices

\section{Proof of Theorem \ref{th:ABT}}\label{app:a}

\begin{proof}
First we reformulate the  optimal policy for a finite $\lambda$ as a threshold policy. Then, we show that in the limit as $\lambda\to\infty$ the threshold function of the optimal policy becomes deterministic and equal to that of the $\ABT$, which also implies that the $\ABT$ policy is asymptotically optimal.

Recall that the optimal solution of the optimization problem~\eqref{eq:opt1} is to store the $C$ items with the highest values of $\widehat \mu_m$. For each item $m$, $\widehat \mu_m$ is independent of the other items; let $F$ be the distribution of $\widehat \mu_m=\mean{\mu_m|N_m,\tau_m}$ and let $F^\lambda(x)=\frac1{|\Mcal|}\sum_{m\in\Mcal}\delta_{\widehat \mu_m}(x)$ be the empirical distribution of the $\widehat \mu_m$'s, where $\delta_x$ is a Dirac function at $x$. The distribution $F$ has a density, because, given any value of $N_m$, $\widehat \mu_m$ is a smooth decreasing function of the continuous random variable $\tau_m$. As a consequence, for finite $\lambda$, all the values $(\widehat \mu_m)$ are almost surely distinct. Thus, a more intricate but equivalent way to define the optimal policy is that it stores all the items $m$ which have $\widehat \mu_m$ larger or equal to a threshold $\theta^\lambda(C)$, where we set $\theta^\lambda(C)$ equal to the $\frac C{|\Mcal|}$-th upper-quantile of the empirical distribution $F^\lambda$, i.e., $\theta^\lambda(C)$ is the largest value such that $\PP_{F^\lambda}\left(\widehat \mu_m\geq\theta^\lambda(C)\right)=\frac C{|\Mcal|}$. Note that, if $F$ did not have a density, it would not always be possible to find such a value $\theta^\lambda(C)$, which exactly separates the contents with the top $C$ estimates $\widehat \mu_m$ from the rest. Indeed, if $F$ had atoms, many contents could have the same value of $\widehat \mu_m$; there would then be a need for a tie-breaking rule to decide between content with the same estimate $\widehat \mu_m$.

We now let the shot arrival rate and the cache size tend to infinity together, i.e., $\lambda\to\infty$ with $\lim_{\lambda\to\infty}\frac{C}{\lambda T}=\gamma_c$. The size $|\Mcal|$ of the set of active contents is a Poisson random variable with mean $\lambda T$, so $\frac C{|\Mcal|}\to\gamma_c$. Also, the $\widehat \mu_m$'s are independent samples from $F$, so their empirical distribution tends to $F$, i.e., $F^\lambda\to F$ almost surely. Then, the $\frac C{|\Mcal|}$-th upper-quantile $\theta^\lambda(C)$ of $F^\lambda$ tends to the $\gamma_c$-th upper-quantile $\theta(\gamma_c)$ of $F$, which is well-defined because $F$ has a density. For each value of $\tau$, we let $\widetilde N(\tau)$ be the smallest integer $k$ such that $\mean{\mu_m\big|N_m=\widetilde N(\tau),\tau_m=k}\geq\theta(\gamma_c)$; this is the age-dependent threshold of the $\ABT$ policy.
Then, in the many-content regime, we have almost surely that $\widehat \mu_m\geq\theta(\gamma_c)$ if and only if $N_m\geq\widetilde N(\tau_m)$, which means the $\ABT$ policy is the limit as $\lambda\to\infty$ of the finite-system optimal policy $\pi^*(\lambda,T)$. Finally, $\PP_{F^\lambda}\left(\widehat \mu_m\geq\theta(\gamma_c)\right)\underset{\lambda\to\infty}\to \PP_F\left(\widehat \mu_m\geq\theta(\gamma_c)\right)=\gamma_c$, which means the fraction of contents initially stored by the $\ABT$ policy tends to $\gamma_c$, and thus the last stage of $\ABT$ which sets arbitrary $y_m$'s to 0 to ensure the cache size constraint affects a negligible number of contents. This implies $\lim_{\lambda\to\infty} \left|h^{\ABT}(\lambda,T)-h^*(\lambda,T)\right|=0$; hence, the $\ABT$ policy is asymptotically optimal. The expression for the asymptotic optimal expected hit probability $h^*(\infty,T)$ follows directly from the expression of the $\ABT$ policy.
\end{proof}

\section{Proof of Theorem \ref{th:agg}}\label{app:b}
\begin{proof}
 To prove the result we will make a connection between three arrival processes, (i) the observations at a local cache, (ii) the observations at the global cache, and (iii) the observations at the global cache about a system with accelerated time. In particular we will show that (i) and (iii) are the same in distribution. Then, we will show that the corresponding caching mappings are the same, from which the hit rate comparison will follow.

We will need an intermediate step. Define the \emph{$L$-speedup dynamics} $\nu_m^\mathcal L(\tau)=\mu_m^\mathcal L(L\tau)$, for which time evolves $L$-times faster than in the original system.
Under an $L$-speedup, a SNM process with rate $\lambda$ becomes a SNM with rate $\lambda L$ and the shot duration is shrunk to $T/L$. The associated Poisson process $S^\mathcal L_m(\tau)$ of requests for content $m$, with rate $\nu_m^\mathcal L(\tau)$, is called the $L$-speedup aggregate requests process. As the next lemma states, the $L$-speedup aggregate requests process is statistically identical to the request process which an individual local cache receives in the original system.
\begin{lemma}[Speedup Statistics]\label{lem:agg}
\[
S^\mathcal L_m(\tau)\stackrel{d}{=}N_m^l(\tau),\quad \forall l=1,\dots,L, \forall \tau, \forall m.
\]
\end{lemma}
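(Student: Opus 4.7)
The plan is to exploit the fact that both $S_m^\mathcal L(\tau)$ and $N_m^l(\tau)$ are (doubly stochastic) Poisson processes, so it suffices to show that their random intensity measures agree; recall that a Poisson process is fully characterized by its intensity, hence distributional equality reduces to an identity between mean counts on arbitrary intervals, together with a matching of the underlying randomness that drives the intensity.

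By the thinning property of Poisson processes, $N_m^l$ is Poisson with rate $\mu_m^\mathcal L(t)/L$, since each request of the global process is routed uniformly to one of the $L$ caches with probability $1/L$; its mean count on an interval $[a,b]$ is thus $\int_a^b \mu_m^\mathcal L(t)/L\,dt$. The speedup process $S_m^\mathcal L$ is by definition Poisson with rate $\nu_m^\mathcal L(\tau)=\mu_m^\mathcal L(L\tau)$, so the substitution $u=L\tau$ in $\int_a^b \mu_m^\mathcal L(L\tau)\,d\tau$ gives $\tfrac{1}{L}\int_{La}^{Lb}\mu_m^\mathcal L(u)\,du$, which is exactly the mean of $N_m^l$ on the correspondingly rescaled interval. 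Viewed through its intensity, the speedup process at speedup time $\tau$ therefore matches the local process at local time $L\tau$, which is precisely what the $L$-speedup construction is designed to encode.

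It then remains to match the underlying shot randomness. For a fixed content $m$ the shot is described by a power-law volume $\mu_m^\mathcal L$ with the same law in both systems, and by a shot age relative to the observation time that is uniform on $[0,T]$ in the local system and uniform on $[0,T/L]$ in the speedup system; these two age distributions are related by exactly the same factor-$L$ rescaling that appears in the intensity computation, which makes the identification $\tau\leftrightarrow L\tau$ consistent throughout (intensity, shot age, and power-law volume). The main obstacle is therefore careful bookkeeping: one must keep straight which quantities live on the speedup time axis and which on the local one, and apply the factor-$L$ rescaling consistently to all three of the shot age, the Poisson intensity, and the integration variable. Once this is done, the proof reduces to the thinning-plus-change-of-variable computation above, combined with the characterization of Poisson processes by their intensities, to conclude the claimed equality in distribution.
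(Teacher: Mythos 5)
Your proposal is correct and follows essentially the same route as the paper: both reduce the claim to matching Poisson intensities on arbitrary intervals via the $1/L$ thinning of the global process together with the change of variable $u=L\tau$, which yields $S_m^\mathcal L(I/L)\stackrel{d}{=}N_m^l(I)$. Your extra remark about matching the shot-age and shot-volume randomness just makes explicit the Cox/doubly-stochastic aspect that the paper handles implicitly by conditioning on the realized rate $\mu_m^\mathcal L(\cdot)$; it is a useful clarification but not a different argument.
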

\begin{proof}
Let us consider any interval $I=(a,b)$ in $\R_+$. We can define $N_m^l(I)=N_m^l(b)-N_m^l(a)$ and $\mu_m^l(I)=\int_I\mu_m^l(\tau)d\tau$; this simply means we consider the number of requests occurring at cache $l$ during the time interval $I$ rather than from a fixed time until time $0$. $N_m^l(I)$ is a Poisson random variable with mean $\int_I\mu_m^l(\tau)d\tau$. Similarly, $S^\mathcal L_m(I/L)$ is a Poisson random variable with mean $\int_{I/L}\nu_m^\mathcal L(\tau)d\tau$. In addition, we have $$\int_{I/L}\nu_m^\mathcal L(\tau)d\tau=\frac{1}{L}\int_I\mu_m^\mathcal L(\tau)d\tau=\int_I\mu_m^l(\tau)d\tau,$$ so that $S_m^\mathcal L(I/L)\overset d=N_m^l(I)$ for all intervals $I\subseteq \R_+$, which shows the two process are statistically identical.
\end{proof}
Using Lemma~\ref{lem:agg} we can establish a one-to-one mapping between policies $\pi$ using local information and policies $\pi^\mathcal L$ using global information for the $L$-speedup dynamics. Let $z_m^{\pi^\mathcal L}$ be the vector indicating which contents are stored at time $0$ for the $L$-speedup dynamics under policy $\pi^\mathcal L$, i.e., $z_m^{\pi^\mathcal L}=\pi_m^\mathcal L[(S_m^\mathcal L(\tau))_{m,\tau}]$. For any policy $\pi^l$ using local information and for any realization of the requests processes, we can define $\pi^\mathcal L$ as $\pi_m^\mathcal L[(S_m^\mathcal L(\tau))_{m,\tau}]=\pi_m^l[(N_m^l(\tau))_{m,\tau}]$ for all $m$. Using Lemma~\ref{lem:agg}, the hit probabilities are the equal in distribution under $\pi^l$ and for the $L$-speedup dynamics under $\pi^\mathcal L$. Hence, $h_\mathcal L^{\pi^\mathcal L}(T/L)=h_l^{\pi^l}(T),\:\forall T>0$. The same reasoning can be made starting from any policy $\pi^\mathcal L$ using aggregate information to define a policy $\pi^l$ using only the local information of cache $l$ for the original system such that the two policies have identical performance in distribution. The theorem then follows immediately by considering optimal policies in both directions.
\end{proof}

\section{Proof of Theorem \ref{th:corr}}\label{app:c}
\begin{proof}
By definition, we have $\sum_{l=1}^L\mu_m^l=\mu_m^\mathcal L$ for all $m$. The local popularity distributions $\mu_m^l$ are identically distributed for each edge cache $l$, so that $$L\mean{\mu_m^l\big|Z_m}=\mu_m^\mathcal L,$$
where the expectation is over the profiles $X_m$ and $Y_l$. This shows the local popularity are larger for the convex stochastic order than the global popularities (after re-scaling by the constant factor $L$, which will not impact hit probabilities). Also, the optimal expected hit probability under a given popularity distribution $\mu$ is a convex function of $\mu$. Indeed, in a finite system $\lambda<\infty$, we have 
\[
h^*_\mu(\lambda,\infty)=\frac{1}{\overline\mu\lambda T}\EE_\mu\left[\max_{\substack{\forall m,\:y_m\in \{0,1\}\\\sum_{m\in\mathcal{M}}y_m=C}}\sum_{m\in\mathcal{M}}y_m\mu_m\right],
\]
where the popularity of the items are independently drawn from $\mu$. Convexity follows from maximum being a convex function. 
Then, by definition of the convex stochastic order, we have $\mean{h^*_l(\lambda,\infty)\big|(Z_m)}\geq h^*_\mathcal L(\lambda,\infty)$. Taking the limit as $\lambda\to\infty$ and $\frac{C}{\lambda T}\to\gamma_c$ proves the first statement.

To compute the almost sure limit of $h^*_l(\infty,\infty)$ as $L\to\infty$, the reasoning is very similar as that to compute the limit of $h^*(\infty,T)$ in Theorem~\ref{th:ABT}, except that we do not deal with how to estimate the probabilities. Therefore, we only explain the main steps. Instead of defining a unique threshold $\theta(\gamma_c)$, we now need to define a threshold $\theta^l(\gamma_c,L)$ at each edge cache (we will omit to write the dependency on $\gamma_c$ from now on). Again, for finite values of $L=|\Lcal|$, this threshold is defined by the $\gamma_c$-th upper quantile of the distribution of $(\mu_m^l)_{m\in\Mcal}$ (where the dependence in $L$ is omitted from the notation) for each $l\in\Lcal$. In other words, the local threshold $\theta^l$ is characterized by 
\begin{align*}
\gamma_c&=\prob{\mu_m^l\geq\theta^l(L)}=\mean{\prob{\mu_m^l\geq\theta^l(L)\big|Z_m}}\\
&\underset{L\to\infty}\to\mean{\prob{g(|X_m-Y_l|)\frac{\mu_m^\mathcal L}L\geq\theta^l\Big|Z_m}}\\
&=2\int g^{-1}\left(\frac{L\theta^l}{\mu_m^\mathcal L}\right)dZ_m,
\end{align*}
where the limit follows from the limiting expression of equation~\eqref{eqn: limit local pop}. Intuitively, the local thresholds $(\theta^l(L))_{l\in\Lcal}$ become independent of the particular location $l$ and of the value of $Y_l$ and converge to a same limiting value $\theta^l$ as $L\to\infty$. The inner probability in the expression above can be computed explicitly, due to the particular form of $g$ assumed here, which implies the inverse function $g^{-1}(t)=\min\{x\geq0:\:g(x)\geq t\}$ is well-defined (with $g^{-1}(t)=0$ if $g(x)<t$ for all $x$) and onto:
\[
\prob{\mu_m^l\geq\theta^l\big|Z_m}=g^{-1}\left(\frac{L\theta^l}{\mu_m^\mathcal L}\right),
\]
It remains only to compute the limit of the optimal expected hit probability:
\begin{align*}
h^*_l&\hspace{-0.03in}=\hspace{-0.03in}\frac1{\overline\mu}\mean{\mu_m^l\ind(\mu_m^l\hspace{-0.03in}\geq\theta^l(L))}\hspace{-0.03in}=\hspace{-0.03in}\frac1{\overline\mu}\mean{\mean{\mu_m^l\ind(\mu_m^l\hspace{-0.03in}\geq\theta^l(L))\big|Z_m}}\\
&\underset{L\to\infty}\to\frac1{\overline\mu}\mean{\frac{\mu_m^\mathcal L}L\mean{g(|X_m-Y_l|)\ind(\mu_m^l\geq\theta^l)\Big|Z_m}}
\end{align*}
Again, the inner expectation can be computed explicitly:
\[
\mean{g(|X_m-Y_l|)\ind(\mu_m^l\geq\theta^l)\big|Z_m}=2\int_0^{g^{-1}\left(\frac{L\theta^l}{\mu_m^\mathcal L}\right)} g(t) \:dt,
\]
which yields the claimed expression.
\end{proof}

\end{document}